\newtheorem{theorem}{Theorem}
\newtheorem{lemma}{Lemma}
\newtheorem*{lemma*}{Lemma}
\newtheorem{remark}{Remark}
\newtheorem{corollary}{Corollary}
  \def\cC{{\mathcal{C}}}
 \def\cN{{\mathcal{N}}}  \def\cP{{\mathcal{P}}}
 \def\cR{{\mathcal{R}}}  
 \def\cV{{\mathcal{V}}}
\def\sinr{\mathop{\mathrm{SINR}}}    
\def\limit{\mathop{\mathrm{lim}}}
\def\hh{\mathop{\mathrm{h}}}
\def\EE{\mathop{\mathrm{E}}}
\def\Var{\mathop{\mathrm{Var}}}
 \def\bomega  
\def\bTheta{{\pmb{\Theta}}}
\def\b0{{\pmb{0}}} 
\begin{document}

\title{Mobility-Assisted Covert Communication over Wireless Ad Hoc Networks}

 \author{\IEEEauthorblockN{Hyeon-Seong Im$^*$ and Si-Hyeon Lee$^\dagger$}\\
\IEEEauthorblockA{$^*$Department of Electrical Engineering, POSTECH, South Korea\\
$^\dagger$School of Electrical Engineering, KAIST, South Korea\\
Emails: imhyun1209@postech.ac.kr, sihyeon@kaist.ac.kr}\thanks{The material in this paper will be presented in part at IEEE ISIT 2020 \cite{Im:2020}.}}

\maketitle

\begin{abstract}
We study the effect of node mobility on the throughput scaling of the covert communication over a wireless adhoc network. It is assumed that $n$ mobile nodes want to communicate each other in a unit disk while keeping the presence of the communication secret from each of $\Theta(n^s)$ non-colluding wardens ($s>0$). Our results show that the node mobility greatly improves the throughput scaling, compared to the case of fixed node location. In particular, for $0<s<1$, the aggregate throughput scaling is shown to be linear in $n$ when the number of channel uses that each warden uses to judge the presence of communication is not too large compared to $n$. 

For the achievability, we modify the two-hop based scheme by Grossglauser and Tse (2002), which was proposed for a wireless ad hoc network without a covertness constraint, by introducing a preservation region around each warden in which the senders are not allowed to transmit and by carefully analyzing the effect of covertness constraint on the transmit power and the resultant transmission rates. This scheme is shown to be optimal {for $0<s<1$} under an assumption that each node outside preservation regions around wardens uses the same transmit power.


\end{abstract}
\section{Introduction}\label{sec1}
In some communication applications like military communications, it is required not only that the adversary should not decode the message, 
but also that it should not detect the presence of communication. Such scenarios are called covert communications. 
The fundamental limits of covert communications have been characterized mainly for point-to-point scenarios such as additive white Gaussian noise (AWGN) channel \cite{Bash:2013, Ligong:2016,Bloch:2016}, discrete memoryless channel \cite{Ligong:2016,Bloch:2016}, channels with uncertainly \cite{Che:2014,Liu:2018,Shahzad:2017,Lee:2017}, and channels with multiple antennas \cite{Abdelaziz:2017}. For the standard AWGN channel \cite{Bash:2013,Ligong:2016,Bloch:2016} where a warden utilizes $l$ channel uses to judge the presence of the communication, it was shown that the received power at the warden should be  $\Theta(1/\sqrt{l})$ to satisfy the covertness, which in turn restricts the transmit power. Accordingly, the number of information bits that can be communicated covertly over AWGN channels with $l$ channel uses scales with $\sqrt{l}$ (called square root law). 

Recently, the study on the fundamental limits of covert communications has been extended to various multi-user scenarios, including broadcast channels \cite{Vincent:2019}, multiple access channels \cite{Keerthi:2016}, interference channels \cite{Cho:2020}, and some multi-hop networks \cite{Wu:2017}, \cite{Sheikholeslami:2018}. As a more general setup, the throughput scaling of the covert communication over a wireless ad hoc network was studied in \cite{Cho:2019} where $n$ nodes of fixed locations want to communicate each other covertly against a set of non-colluding wardens. The authors \cite{Cho:2019}  proposed multi-hop (MH) \cite{Gupta:2000} and hierarchical cooperation (HC) \cite{Ozgur:2007}-based schemes, which  are modified by introducing a preservation region \cite{Jeon:2011} around each warden. By preventing the transmission of the nodes inside the preservation regions, the nodes outside the preservation regions can increase the transmission power while satisfying the covertness constraint. 

In this paper, we study the effect of node mobility on the throughput scaling of the covert communication over the wireless adhoc network. For the case without a covertness constraint, it is known that the mobility of nodes does not increase the capacity scaling when the network area is fixed, i.e., the capacity scaling is linear in $n$ for both cases with and without the mobility of nodes \cite{David:2002,Ozgur:2007}. Hence, it would be an interesting question whether the mobility can increase the throughput scaling in the presence of  the covertness constraint. To that end, we assume that $n$ mobile nodes want to communicate each other in a fixed area while keeping the presence of the communication secret from each of $\Theta(n^s)$ wardens ($s>0$). The locations of wardens can be fixed or vary over time. A practical scenario of our model would be the military situation where several soldiers are invading the enemy's area while the soldiers are keeping the presence of the communication secret form each enemy. 

Interestingly,  we show that the mobility of nodes greatly improves the throughput scaling of the covert communication over the wirelss adhoc network, compared to the case of fixed node location \cite{Cho:2019}. In particular, for $s<1$, while the linear throughput scaling is not possible unless the path loss exponent $\alpha$ is very close to 2 for the case of fixed node location, it is possible with node mobility when the number of channel uses that each warden uses to  judge  is not too large compared to $n$. For the achievability, we propose a two-hop based scheme where a source node transmits its message to the nearest node, and the node who took over the message forwards it to the intended destination node once the two nodes become close. In our scheme, we also set preservation regions around each warden so the transmissions are not allowed inside those regions. The area of preservation regions is chosen to make the fraction of nodes inside the regions negligible. Note that in our scheme, the communication from a source to its destination consists of two-hop small-range transmissions by exploiting the node mobility. In contrast, the proposed schemes in \cite{Cho:2019} for the case of fixed node location involves long-range transmission (HC-based scheme) or multi-hops (MH-based scheme). We note that the long-range transmission of HC scheme does not degrade the performance in the absence of the covertness constraint as the received power is sufficiently large, but it  does degrade under the covertness constraint as the network turns into power-limited.

We note that our scheme operates similarly as the two-hop based scheme \cite{David:2002}, which was proposed for a wireless ad hoc network without a covertness constraint, but there are some technicalities different from \cite{David:2002} due to the presence of the covertness constraint. First, the received power at each warden is precisely evaluated to determine the allowable transmit power at the senders. Next, as the transmit power is severely constrained due to the covertness constraint, the distance between a sender-receiver pair affects the order of the point-to-point communication rate between them. By taking this fact into account, we carefully analyze the communication rate based on the distribution of the distance between a sender-receiver pair. 

The remaining of this paper is structured as follows. We introduce the network model and formulate the problem in Section~\ref{sec2}, and present the main results of this paper in Section~\ref{sec3}. To prove the main results, we first derive sufficient and necessary conditions for the covertness constraint in Section~\ref{sec4}  and then prove the achievability and the converse parts in Sections~\ref{sec5} and \ref{sec6}, respectively. Finally, Section~\ref{sec7} concludes the paper with some further works.
 
\section{Problem Statement}\label{sec2}

\subsection{Network Model}\label{sec2A}
In a unit disk, $n$ nodes are uniformly and independently distributed in each discrete time unit $t$.\footnote{Here each time $t$ consists of  several channel uses so that the communication rate of $\log (1+\sinr)$ between two communication parties is assumed be achievable for each time unit where $\sinr$ denotes the signal to noise plus interference ratio.} The random process governing the location of each node is assumed to be strict-sense stationary (SSS) and ergodic.
Each  node is a source  and a destination  simultaneously and the $n$ source-destination pairs are randomly determined. In the same area, there are $n_w=\Theta(n^s)$ for $s>0$ {non-colluding wardens.} We consider both cases where the wardens have mobility or not. For the case of no mobility, the wardens are uniformly and independently distributed and their locations are fixed across the time. For the other case, the location of the wardens can change in a SSS and ergodic manner. Each of the $n$ sources wants to communicate with its destination while keeping the presence of the communication secret from each warden. The covertness constraint is described in detail in the next subsection. The network  is illustrated in Fig.\ref{fig1}.

The received signal at  node $j$ at time $t$ is given as
\begin{align}
  Y_j[t] = \sum_{k=1}^{n}H_{jk}[t]X_k[t]+N_j[t],\label{eq:3}
\end{align}
where $X_k[t]$ is the transmitted signal by node $k$, $N_j[t]\sim\cC\cN(0,N_0)$ is the circular symmetric Gaussian noise with zero mean and variance $N_0$, and $H_{jk}[t]$ is the channel gain from node $k$ to node $j$ given by
\begin{align}
 H_{jk}[t] = {\sqrt{G}\over{(d_{jk}[t])^{\alpha/2}}}\exp(j\theta_{jk}[t]). \label{eq:1}
\end{align}
Here, $d_{jk}[t]$ is the distance between nodes $k$ and $j$, $\alpha>2$ is the path loss exponent, $\theta_{jk}[t]$ is uniformly and independently distributed phase, and $G$ is given as 
\begin{align}
  G = \left( {\lambda\over{4\pi}}\right)^2G_sG_r\label{eq:2}
\end{align}
from Friis' formula, 
where $G_s$ and $G_r$ are the antenna gains at the sender and the receiver, respectively, and $\lambda$ is the carrier wavelength.  Each  node has the same average power constraint  of $P$. The channel state information is available only at the receivers and the delay toleration of data packets from source to destination is assumed to be sufficiently large. We suppose that all the sender-receiver pairs share a sufficiently long secret key. 

The received signal at warden  $w$ at time $t$ is
\begin{align}
Z_w[t] = \sum_{k=1}^{n}H'_{wk}[t]X_k[t]+N'_w[t],\label{eq:4}
\end{align}
where $N'_w[t]\sim\cC\cN(0,N_0)$ is the circular symmetric Gaussian noise with variance $N_0$ and $H'_{wk}$ is the channel gain from  node $k$ to warden  $w$ defined in a similar manner as \eqref{eq:1}.

\begin{figure}
\centering
\includegraphics[width=0.9\columnwidth]{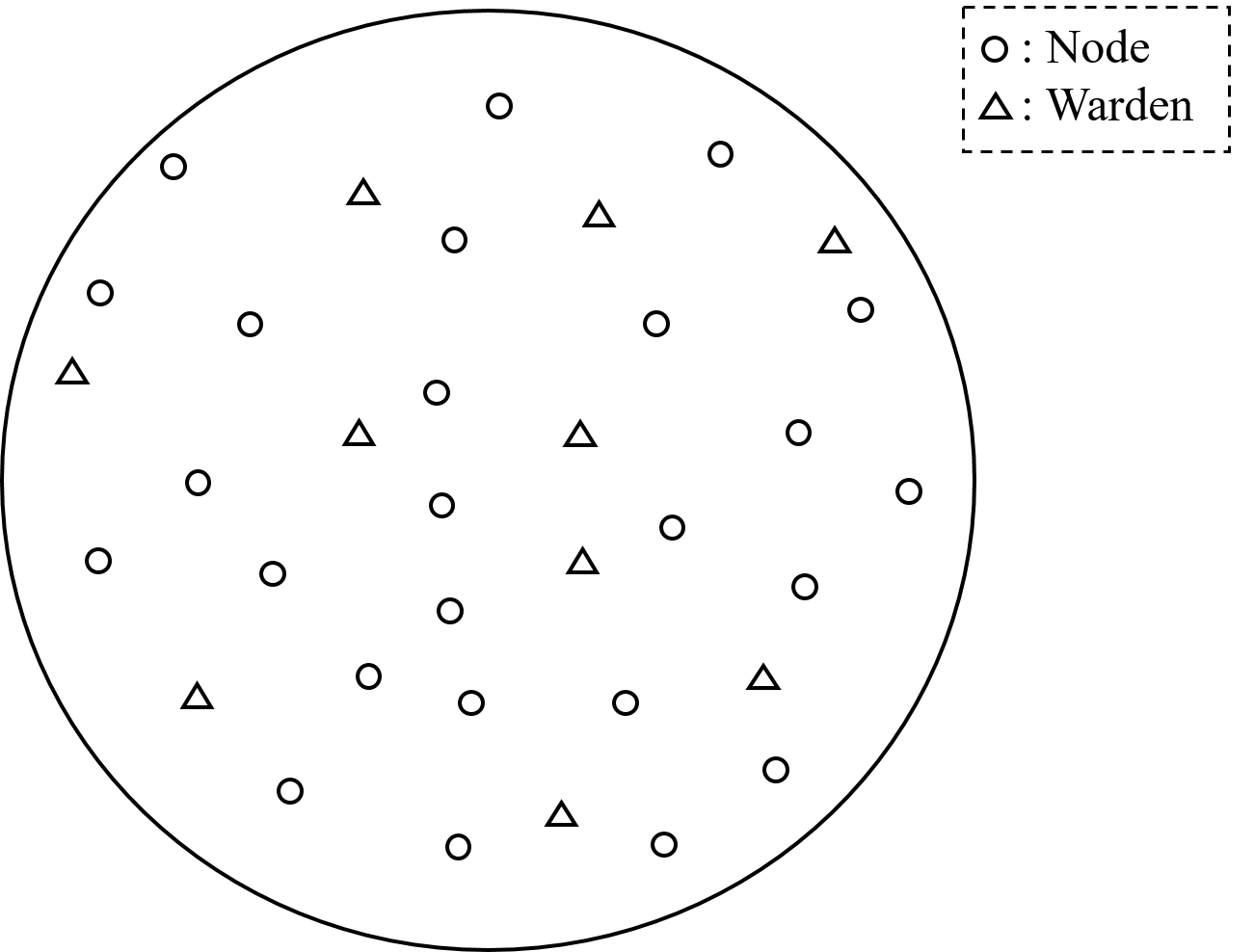}
\caption{In a unit area, $n$ mobile nodes want to communicate each other in a unit disk while keeping the presence of the communication secret from each of $\Theta(n^s)$ wardens. }\label{fig1}
\end{figure}
\subsection{Covertness Constraint}\label{sec2B} 
Each warden  tests the received signal over $l$ channel uses to detect whether the  nodes are communicating or not. The window of $l$ channel uses can be arbitrarily chosen over the whole communication.\footnote{The $l$ channel uses do not need to be consecutive in our analysis and the results continue to hold if the warden utilizes any arbitrary set of $l$ channel uses.} The communication is said to be covert if it is hard for the warden to determine whether the nodes are communicating (hypothesis $H_1$) or not (hypothesis $H_0$).  The optimal hypothesis test of warden $w$ satisfies
\begin{align}
 p(H_0|H_1)+p(H_1|H_0)&=1-V(Q_{Z_w^l} \| Q_{N'_{w}}^{\times l})\label{eq:5.2} \\ 
&\overset{(a)}\geq 1-\sqrt{D(Q_{Z_w^l} \| Q_{N'_{w}}^{\times l})}, \label{eq:5.3}
\end{align}
where $V(\cdot\|\cdot)$ is the total variational distance, $D(\cdot\|\cdot)$ is the Kullback-Leibler divergence, $Q_{Z_w^l}$ (resp. $Q_{N'_{w}}^{\times l}$) is the distribution of the received signal at warden $w$ over $l$ channel uses when the communication occurs (resp. does not occur). Here $N'_w\sim\cC\cN(0,N_0)$, and $Q_{N'_{w}}^{\times l}$ is the $l$-fold of $Q_{N'_w}$. Inequality $(a)$ is by the Pinsker's inequality, e.g.,~\cite{LehmannRomano:TSH}. Thus,  if $D(Q_{Z_w^l} \| Q_{N'_{w}}^{\times l})$ is small, the optimal hypothesis test of warden $w$ is similar with a blind test, which satisfies $p(H_0|H_1)+p(H_1|H_0)=1$. Hence, we set the covertness constraint as
\begin{align}
D(Q_{Z_w^l} \| Q_{N'_{w}}^{\times l})\leq\delta   \mbox{ for }   w=1,2,...,n_w .\label{eq:5}
\end{align}
for some $\delta>0$.

\subsection{Long-Term Throughput}\label{sec2C}
We note that the throughput from a source to its destination varies over time  since the  nodes have mobility. Let's say node $j$ communicates to its destination node $k$ at time $t$ with the rate of $R_{jk}(n,s,t)$, while satisfying the covertness constraint \eqref{eq:5} for all the wardens. The long-term throughput $\lambda(n,s)$ is said to be feasible if 
\begin{align}
\limit_{T\rightarrow\infty} {1\over T}{\sum_{t=1}^{T}R_{jk}(n,s,t)} \geq \lambda(n,s)\label{eq:6}
\end{align}
for all source-destination pairs $(j,k)$. The goal of this paper is to characterize the scaling of the maximally achievable aggregate throughput. 

\section{Main results}\label{sec3}
In this section, we state our main results. First, achievable aggregate throughputs are presented in Theorems \ref{thm1} and \ref{thm2}, which are proved in Section \ref{sec5}. For the converse, a trivial upper bound on the aggregate throughput is given in Theorem \ref{thm_ub} and some nontrivial upper bounds under an assumption are given in Theorems \ref{thm3} and \ref{thm4}, which are proved in Section \ref{sec6}. 
\begin{theorem}\label{thm1}
For $0< s<1$, the following aggregate throughput is achievable with covertness constraint $\delta$ and testing channel length $l$ for any $\epsilon>0$:
\begin{align}
T(n,s)=\Theta\left(n^{1-\epsilon}\cdot \min \left(\left({n^{(1/2-s/2)(\alpha-2)}}\over \sqrt{l}\right)^{2/\alpha}, 1 \right) \right),\label{eq:7}
\end{align}
with high probability (probability going to $1$ as $n$ goes to infinity).
\end{theorem}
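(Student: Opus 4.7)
The plan is to adapt the two-hop scheme of Grossglauser and Tse by placing a preservation disk of radius $r_p$ around every warden (inside which no node transmits) and carefully tuning $r_p$ and the transmit power $P$ against the covertness constraint. I would take $r_p = n^{-s/2-\epsilon'}$ for a small $\epsilon'>0$, so that the total area of the $\Theta(n^s)$ preservation regions is $o(1)$; by the SSS-ergodicity of the node mobility, the fraction of time any given source-relay-destination triple is blocked is $o(1)$, which will ultimately only cost an $n^{-\epsilon}$ factor in the aggregate throughput.

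Next, I would convert the covertness constraint $D(Q_{Z_w^l}\|Q_{N'_w}^{\times l})\leq\delta$ into a power budget. From the Gaussian KL bound $D\lesssim l\,(I_w/N_0)^2$, with $I_w$ the aggregate received power at warden $w$, and by integrating the density of the $n_t=\Theta(n)$ simultaneously active transmitters outside the exclusion disk, I would obtain $I_w \lesssim \int_{r_p}^{1} n_t\cdot 2\pi r\cdot PG\, r^{-\alpha}\,dr \asymp n_t PG\,r_p^{2-\alpha}/(\alpha-2)$ for $\alpha>2$, and hence the allowable power $P \lesssim r_p^{\alpha-2}/(n\sqrt{l})$. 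The achievable rate of a single sender-receiver pair in one slot is $\EE[\log(1+PG/d^\alpha)]$, where $d$ is the sender-to-nearest-receiver distance whose density behaves like $2\pi n r$ near $0$; the change of variables $r = P^{1/\alpha} u$ (with $\alpha>2$ giving tail convergence) shows this expectation is $\Theta(n P^{2/\alpha})$ in the noise-limited regime, with the dominant contribution coming from $r \sim P^{1/\alpha}$ where $\log(1+P/r^\alpha)=\Theta(1)$. In the interference-limited regime, the signal and the aggregate interference at the intended receiver are both $\Theta(PGn^{\alpha/2})$, as in the classical Grossglauser--Tse analysis, so the per-pair rate saturates at $\Theta(1)$. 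Aggregating over the $n$ pairs gives $\Theta(n\cdot\min(nP^{2/\alpha},1))$, and substituting $P\asymp r_p^{\alpha-2}/(n\sqrt{l})$ with $r_p\asymp n^{-s/2}$ recovers exactly the inner expression $(n^{(1-s)(\alpha-2)/2}/\sqrt{l})^{2/\alpha}$ of the theorem.

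The main obstacle I foresee is upgrading the rate identity $\EE[\log(1+PG/d^\alpha)]\asymp nP^{2/\alpha}$ from expectation to a high-probability guarantee that holds simultaneously across the $\Theta(n)$ pairs and many time slots. The key difficulty is that the interference sum $\sum_k d_k^{-\alpha}$ has a divergent tail at zero for $\alpha>2$, so one must condition on the preservation-region exclusion at the warden side and on the typical $\Theta(1/\sqrt{n})$ nearest-sender distance at the intended receiver before applying concentration and a union bound, incurring a polylogarithmic cost that is absorbed into the $n^{-\epsilon}$ slack. A secondary step, essentially routine once the per-slot analysis is in place, is a Birkhoff-type ergodic argument invoking the SSS-ergodicity of the mobility to show that the two hops (source-to-relay and relay-to-destination) each deliver the per-slot average rate in the long run, so that the long-term throughput is attained for every source-destination pair with high probability.
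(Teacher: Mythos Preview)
Your proposal is correct and follows essentially the same route as the paper: preservation regions of radius $r_p=\Theta(n^{-(s/2+\epsilon')})$, the ring-integration bound $I_w\lesssim nP r_p^{2-\alpha}$ giving $P_{\mathrm{tx}}=\Theta(l^{-1/2}n^{-1-(s/2)(\alpha-2)-\epsilon})$, the per-pair rate obtained by conditioning on $r<\min(P_{\mathrm{tx}}^{1/\alpha},n^{-1/2})$ (your ``noise-limited vs.\ interference-limited'' split), and the aggregation $T(n,s)=\Theta(n\cdot\min(nP_{\mathrm{tx}}^{2/\alpha},1))$. The concentration issues you flag are handled in the paper exactly as you anticipate---a Chebyshev/ring-counting lemma for the interference tail (your $\sum_k d_k^{-\alpha}$ concern) and WLLN in place of your Birkhoff argument for the time average---with the slack absorbed into $n^{-\epsilon}$.
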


\begin{theorem}\label{thm2}
For $s\geq1$, the following aggregate throughput is achievable with covertness constraint $\delta$ and testing channel length $l$ for any $\epsilon>0$:
\begin{align}
T(n,s) = \Theta\left(n^{1-\epsilon}\cdot  \left({n^{{{\alpha}(1/2-s/2)}}\over\sqrt{l}}\right)^{2/\alpha}\right),\label{eq:9}
\end{align}
with high probability.
\end{theorem}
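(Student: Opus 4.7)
The plan is to reuse the two-hop relaying scheme of Grossglauser--Tse as adapted for Theorem \ref{thm1}, recalibrating the preservation radius and transmit power for the regime $s\geq 1$, where the warden count $\Theta(n^s)$ now exceeds $n$ and changes the mechanism driving the covertness constraint. First I would set up the scheme exactly as in \cite{David:2002}: partition time into slots, designate senders and receivers in each slot, and pair each active sender with its nearest active receiver (typical pair distance $\Theta(n^{-1/2})$). Around each warden I would draw a preservation disk of radius $r_p$; any would-be sender inside some preservation region is silenced in that slot. Over many slots, source--destination packets are delivered in two hops via random relays, with the strict-sense stationarity and ergodicity of the node motion providing the usual Grossglauser--Tse service-rate guarantee.

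Next I would fix $r_p$. The requirement that the preservation regions cover only a vanishing fraction of the unit disk---so that the $\Theta(n)$ active-pair structure of the Grossglauser--Tse scheme survives---forces $r_p = \Theta(n^{-s/2}/\sqrt{\log n})$. Unlike the $s<1$ regime of Theorem \ref{thm1}, here $r_p \ll n^{-1/2}$, and with $K=\Theta(n)$ active senders one has $K r_p^2 = o(1)$. In this regime the worst-case received power $\sigma^2_w = \sum_k P|H'_{wk}|^2$ at a typical warden is dominated by the rare close-encounter senders (those landing just outside the warden's preservation boundary) rather than by a density integral, so the sufficient covertness conditions proved in Section \ref{sec4} must be applied in this close-encounter regime rather than the density regime used in Theorem \ref{thm1}.

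Plugging $r_p$ into the Section \ref{sec4} conditions would yield the maximum admissible power $P$; the received signal at the intended receiver at pair distance $d=\Theta(n^{-1/2})$ then sits in the low-SNR regime, so $\log(1+\mathrm{SNR})\approx \mathrm{SNR}$ and the per-link rate approximates $P G/(d^\alpha N_0)$ up to constants. Multiplying by the $\Theta(n)$ simultaneously served pairs and then using the SSS/ergodicity of the motion to translate the per-slot aggregate into the long-term throughput defined in \eqref{eq:6} should reproduce the claimed $\Theta(n^{1-\epsilon}(n^{\alpha(1/2-s/2)}/\sqrt{l})^{2/\alpha})$, with the $n^{-\epsilon}$ slack absorbing polylogarithmic losses incurred in bounding preservation-region coverage and in the load-balancing.

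The hard part will be establishing the covertness constraint \emph{uniformly} across all $\Theta(n^s)$ wardens with high probability. The warden whose nearest active sender happens to sit closest to its preservation boundary dominates $\sigma^2_w$, so I would need a careful second-moment (or concentration) analysis of the shot-noise $\sigma^2_w$ under the preservation constraint, combined with a union bound over $n^s$ wardens and a judicious polylogarithmic slack in $r_p$ (absorbed into $n^{-\epsilon}$) so that the union bound closes. A secondary subtlety is verifying that silencing a vanishing fraction of nodes per slot does not impair the $\Theta(1/n)$-fraction service rate per source--destination pair that is crucial for converting the per-slot aggregate throughput into the long-term throughput on each pair.
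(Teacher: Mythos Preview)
Your scheme and covertness analysis are essentially on the right track and match the paper's: two-hop relaying with preservation disks of radius $r_p=\Theta(n^{-(s/2+\epsilon')})$, and the received power at a warden dominated by the innermost shell, leading to $P_{\rm tx}=\Theta(l^{-1/2}n^{-s\alpha/2-\epsilon})$ (the paper's Lemma~\ref{lem3}). The gap is in your rate accounting. You propose to evaluate the per-link rate at the \emph{typical} pair distance $d=\Theta(n^{-1/2})$ via the low-SNR approximation and then multiply by $\Theta(n)$ pairs. That yields
\[
T(n,s)\;\approx\; n\cdot \frac{P_{\rm tx}\,G}{N_0}\,n^{\alpha/2}
\;=\;\Theta\!\left(n^{\,1+\alpha(1-s)/2}\,l^{-1/2}\right),
\]
which for $\alpha>2$ and $s\geq 1$ is \emph{strictly smaller} than the claimed $\Theta(n^{2-s}l^{-1/\alpha})$; your plan therefore does not reproduce \eqref{eq:9}.

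The missing idea is that, because $\alpha>2$, the time-averaged pairwise rate is \emph{not} carried by the typical-distance slots; it is carried by the rare slots in which the nearest receiver happens to fall within $P_{\rm tx}^{1/\alpha}\ll n^{-1/2}$ of the sender, where the SINR is $\Omega(1)$ and the rate is $\Theta(1)$. The paper makes this precise in \eqref{eq:68}--\eqref{eq:74}: using the nearest-neighbor CDF (Lemma~\ref{lem5}) one has $p\big(r_j(t)<P_{\rm tx}^{1/\alpha}\big)\approx n\,P_{\rm tx}^{2/\alpha}$, and conditioning on this event gives pairwise throughput $R_{\rm pair}=\Theta(nP_{\rm tx}^{2/\alpha})$, hence aggregate $\Theta(n^{2}P_{\rm tx}^{2/\alpha})=\Theta(n^{2-s}l^{-1/\alpha})$. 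Equivalently (cf.\ the sketch around \eqref{eq:9.01}--\eqref{eq:9.2}), if you integrate $\log(1+P_{\rm tx}r^{-\alpha}/N_0)$ against the pair-distance distribution, the dominant contribution comes from $r\approx P_{\rm tx}^{1/\alpha}$, not from $r\approx n^{-1/2}$. Replace your ``typical-distance, low-SNR'' step with this conditioning (or the full integral) and the rest of your outline goes through.
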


It is known that the aggregate throughput of $\Theta(n^{1-\epsilon})$ for any $\epsilon>0$ is achievable for a wireless adhoc network without any covertness constraint \cite{David:2002}. In the presence of the covertness constraint, our achievable aggregate throughput becomes $\Theta(n^{1-\epsilon})$ when the number of nodes is smaller than the number of wardens and the testing channel length $l$ is sufficiently small compared to the number of nodes. This makes sense since the transmission power would be restricted more severely as more wardens observe more channel outputs. 

Let us briefly describe our proposed scheme and provide a sketch of the proof. The details are in Section \ref{sec5}. Our scheme operates in two phases similarly as the scheme in \cite{David:2002}. In each time slot $t$, a certain fraction of $n$ nodes operate as senders and the others as potential receivers. Each sender communicates with the nearest receiver (sender-receiver pair). The senders and the receivers play the roles of sources and relays, respectively, in phase 1 (odd times), and the roles of relays and destinations, respectively, in phase 2 (even times).  In phase 1, each node selected as a sender transmits its own source data packet to the receiver. In phase 2, each node selected as a sender selects and transmits the data packet intended for the receiver among the stored data packets. In this process, a sender does not transmit if it is inside a certain area around any warden, which we call a preservation region. The area of preservation regions is chosen to make the fraction of nodes inside preservation regions negligible.

Now, the throughput scalings in Theorems \ref{thm1} and \ref{thm2} are roughly derived as follows (a rigorous proof is in Section \ref{sec5}):
\begin{align}
T(n,s)& \approx{1\over2}\cdot{\EE} \left[\sum_k  \log (1+\sinr(r_k)) \right]\label{eq:9.01}\\
&= {1\over2}\cdot {\EE} \left[\sum_{k: \sinr(r_k)=\Omega(1)} \log (1+\sinr(r_k)) \right. \cr
&~~~+ \left. \sum_{k: \sinr(r_k)=o(1)} \log (1+\sinr(r_k))\right] \\
&\overset{(a)}\gtrsim{1\over2}\cdot{\EE} \left[ \sum_{k: \sinr(r_k)=\Omega(1)} \log (1+\sinr(r_k)) \right] \\
&\overset{(b)}\gtrsim n\cdot p(\sinr(r)=\Omega(1)) \Theta(1)\\ 
&\overset{(c)}\approx{n \cdot p(r \leq r_m) \cdot \Theta(1)}\label{eq:9.1}\\
&\overset{(d)}\approx {n \cdot \min(n r_m^2,1) \cdot \Theta(1),}\label{eq:9.2}
 \end{align}
where the expectations are with respect to the random locations of nodes, $r_k$ is the distance between the $k$-th sender-receiver pair, $r$ is the distance between a randomly chosen sender-receiver pair, and $r_m$ is the distance between the sender-receiver pair when $\sinr(r_m)=\Theta(1)$. Here, we have the factor of  $\frac{1}{2}$ since our scheme operates in two phases. We can show that $(a)$ is tight in the sense of scaling by noting that the probability that a sender-receiver pair has a distance of order $r$ is proportional to $r^2$ and $\sinr$ is proportional to $r^{-\alpha}$, and $(b)$ is tight up to the logarithmic order. Also, $(c)$ follows from the property that $\sinr(r)$ is decreasing function in $r$ and $(d)$ is because the probability that a sender-receiver pair has a smaller distance than $r_m$ is proportional to the product of $r_m^2$ and $n$ since the corresponding receiver to a sender is the nearest receiver to  the sender among  other receivers, and because the probability does not exceed 1.  The throughput scailngs in  Theorems \ref{thm1} and \ref{thm2} are obtained by deriving $r_m$ for each case of $0<s<1$ and $s\geq 1$, while Theorem \ref{thm2} has no minimum term because $n r_m^2$ cannot exceed $1$ in that case.

Next, the following is a trivial upper bound on the aggregate throughput scaling, which is the upper bound without the covertness constraint. 
\begin{theorem}\label{thm_ub}
For $s\geq 0$, the aggregate throughput  with covertness constraint $\delta$ and testing channel length $l$ is upper-bounded as follows  for any $\epsilon>0$: 
\begin{align}
T(n,s)=O\left(n^{1+\epsilon} \right),\label{eq:ub}
\end{align}
with high probability.
\end{theorem}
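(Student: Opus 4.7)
The plan is to obtain this bound trivially: the covertness constraint \eqref{eq:5} only shrinks the feasible strategy set, so dropping it can only increase the achievable throughput. Thus $T(n,s)\leq T^{\mathrm{uc}}(n)$, where $T^{\mathrm{uc}}(n)$ denotes the aggregate throughput of the same mobile ad hoc network without any covertness constraint. The classical Grossglauser--Tse result in \cite{David:2002} then gives $T^{\mathrm{uc}}(n)=\Theta(n)$, immediately yielding $T(n,s)=O(n)=O(n^{1+\epsilon})$ with high probability for every $\epsilon>0$.

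For a self-contained argument that sacrifices at most a logarithmic factor, the plan is to bound each source--destination rate by a single-source cut-set bound. At any time $t$, the instantaneous rate from source $s$ to its intended destination cannot exceed the capacity of the cut isolating $\{s\}$, which is at most $\log\bigl(1 + (PG/N_0)\sum_{j\neq s} d_{sj}[t]^{-\alpha}\bigr)\leq \log\bigl(1 + nPG/(N_0\, d_{\min}[t]^{\alpha})\bigr)$, where $d_{\min}[t]$ is the minimum pairwise distance among the $n$ nodes at time $t$. A union bound over the $\binom{n}{2}$ node pairs at a single time shows $d_{\min}[t]\geq n^{-2}$ with probability $1-O(n^{-2})$, under which each per-pair rate is $O(\log n)$. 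Summing over the $n$ source--destination pairs gives an instantaneous aggregate rate of $O(n\log n)$, which is $O(n^{1+\epsilon})$ for any $\epsilon>0$.

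The only mild step that still needs attention is translating the single-time bound into a bound on the long-term average \eqref{eq:6}. Because the node-location process is strict-sense stationary and ergodic, the long-term throughput of every source--destination pair converges almost surely to the expectation of the instantaneous rate under the stationary distribution; since the tail $d_{\min}^{-\alpha}$ is integrable against the planar area element in two dimensions, this expectation inherits the $O(n\log n)$ bound and the conclusion follows. The main (and only) obstacle is verifying the latter integrability claim; the remainder of the argument is routine Shannon theory and, notably, invokes no property specific to the covertness setting, which is precisely why the theorem is called ``trivial''.
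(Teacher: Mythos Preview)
Your self-contained argument is essentially identical to the paper's proof in Section~\ref{sec6A}: both drop the covertness constraint, upper-bound each source's rate by the SIMO cut-set capacity $\log\bigl(1+(P/N_0)\sum_{i\neq j}G/d_{ji}(t)^\alpha\bigr)$, replace every distance by the minimum pairwise distance, and then use a union bound to show this minimum is at least $n^{-(1+\epsilon')}$ (you use $n^{-2}$) with high probability, yielding $O(n\log n)=O(n^{1+\epsilon})$. The time-averaging subtlety you flag is not addressed more carefully in the paper than in your sketch.
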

Hence, for $0<s<1$, if the testing channel length $l$ is sufficiently small so that the minimum in \eqref{eq:7} becomes one, the aggregate throughput scaling in Theorem \ref{thm1} is tight. 

The difficulty in proving a non-trivial upper bound comes from the fact that the distances between the senders and the wardens, which are related to the upper bound on the transmit power from the covertness constraint, and the distances between the senders and the receivers, which affect the transmission rate, independently vary over time. The optimal transmit power control in such a scenario seems to be a challenging problem. As an alternative, for the upper bound, we assume that each node distant from every warden to a certain extent uses the same power at each channel use. 

\begin{theorem}\label{thm3}
For $0<s<1$, the aggregate throughput  with covertness constraint $\delta$ and testing channel length $l$ is upper-bounded as follows for any $\epsilon>0$ under the assumption that each node not contained in the regions of radius $\Theta(n^{-(\frac{s}{2}+\epsilon')})$ around each warden for an arbitrarily small $\epsilon'>0$ uses the same power at each channel use: 
\begin{align}
T(n,s)=O\left(n^{1+\epsilon}\cdot \min \left(\left({n^{(1/2-s/2)(\alpha-2)}}\over \sqrt{l}\right)^{2/\alpha}, 1 \right) \right),\label{eq:10}
\end{align}
with high probability.
\end{theorem}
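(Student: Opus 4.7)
My plan has two main ingredients: first, the covertness constraint together with the common-power assumption is converted into an upper bound on the per-node transmit power $P$; second, the aggregate throughput is bounded by a multiple-access sum-rate inequality applied at every receiver and then averaged against the random geometry.

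For the power budget, let $r_0 = \Theta(n^{-s/2-\epsilon'})$ denote the preservation-region radius. Under the common-power assumption, the aggregate received power at warden $w$ is $GP\sum_{k:\,d_{wk}\ge r_0} d_{wk}^{-\alpha}$. Since nodes are uniform in the unit disk, a Chernoff bound shows that the annulus $[r_0,2r_0]$ around every warden contains $\Theta(n r_0^2)$ nodes with high probability, where the exponent $n r_0^2 = n^{1-s-2\epsilon'}$ easily absorbs the union bound over the $\Theta(n^s)$ wardens when $s<1$. Hence $\sum_k d_{wk}^{-\alpha} = \Omega(n r_0^{2-\alpha})$ with high probability at every warden, and combining this with the necessary covertness condition from Section~\ref{sec4} (which requires the warden's total received power to be $O(1/\sqrt{l})$) gives
\[
P \;=\; O\!\left(\frac{1}{\sqrt{l}\,n^{1+(s/2+\epsilon')(\alpha-2)}}\right).
\]

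For the throughput bound, observe that for any scheme and any destination $j$, the long-term rate delivered to $j$ is at most $\EE[\log(1+X_j)]$ with $X_j = (GP/N_0)\sum_{k\ne j} d_{jk}^{-\alpha}$, by a cut-set / data-processing argument treating every other node as a cooperating multi-access transmitter (so that no amount of relaying or scheduling can carry more information to $j$ than its own received signal contains). Summing over destinations and using stationarity/ergodicity gives $T(n,s)\le n\cdot\EE[\log(1+X)]$, where $X$ is the SNR seen at a typical node. For $\alpha>2$, the sum $\sum d_{jk}^{-\alpha}$ is dominated by its nearest term and the nearest-transmitter distance $d_1$ has density $\approx 2\pi n d e^{-\pi n d^2}$. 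Writing $r_m = (GP/N_0)^{1/\alpha}$ and substituting $u = \pi n d^2$ turns the expectation into
\[
\EE[\log(1+X)] \;\asymp\; \int_0^\infty \log\!\left(1+C u^{-\alpha/2}\right)e^{-u}\,du, \qquad C = (\pi n r_m^2)^{\alpha/2}.
\]
For $C\le 1$, splitting the integral at $u^\star = C^{2/\alpha}$ (where the log's argument crosses $1$) contributes $\Theta(C^{2/\alpha}) = \Theta(n r_m^2)$ from each piece. Plugging the $P$ from the first step into $r_m$ gives $n r_m^2 = \Theta((n^{(1/2-s/2)(\alpha-2)}/\sqrt{l})^{2/\alpha})$, so $T(n,s) = O(n\cdot n r_m^2)$ up to logarithmic factors absorbed into $n^\epsilon$. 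Capping with the trivial bound $T(n,s) = O(n^{1+\epsilon})$ of Theorem~\ref{thm_ub} yields the claimed minimum.

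The hard part is the second step, specifically justifying that $\EE[\log(1+X)]$ can be replaced by the contribution of the nearest transmitter alone: although for $\alpha>2$ the sum $\sum d_{jk}^{-\alpha}$ is dominated by its first term with high probability, a uniform tail estimate is needed to ensure that the atypical events in which several transmitters crowd around $j$ still contribute only a constant factor inside the $\log$. Relatedly, the per-slot multiple-access inequality must be lifted to a long-term rate bound that accommodates arbitrary multi-hop relaying; this requires a Fano-type argument routing every destination's information through its own received signal over many block lengths rather than a single-slot calculation. Once these technical points are established, the analysis reduces to the integral above and the claimed scaling follows.
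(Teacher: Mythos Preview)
Your proposal is correct and follows essentially the same route as the paper: Lemma~\ref{lem7} lower-bounds the received power at a warden via a ring decomposition (your single-annulus Chernoff count is a streamlined variant giving the same $P_{\rm tx}=O(l^{-1/2}n^{-1-(s/2)(\alpha-2)+\epsilon})$), and the throughput is then bounded by the cut-set/SIMO capacity at each node---the paper applies it at the source rather than the destination, but the resulting quantity $\sum_{k\ne j}d_{jk}^{-\alpha}$ is the same---followed by a split at $d_s=P_{\rm tx}^{1/\alpha}$ that is equivalent to your integral evaluation of $\EE[\log(1+X)]$. The two ``hard'' points you flag are precisely what the paper dispatches: the sum $\sum_k d_{jk}^{-\alpha}$ is shown to be at most $Kn^{\epsilon}/d_s(t)^{\alpha}$ by the same ring argument as in Lemma~\ref{lem6} (this is step~(b) of Lemma~\ref{lem9}), and the per-slot bound is lifted to a long-term one simply by WLLN (step~(b) in the chain \eqref{eq:109}--\eqref{eq:118}).
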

\begin{theorem}\label{thm4}
For $s\geq1$, the aggregate throughput with covertness constraint $\delta$ and testing channel length $l$ is upper-bounded as follows for any $\epsilon>0$ under the assumption that each node not contained in the regions of radius $\Theta(n^{-(\frac{s}{2}+\epsilon')})$ around each warden for an arbitrarily small $\epsilon'>0$ uses the same power at each channel use: 
\begin{align}
T(n,s)=O\left(n^{1+\epsilon}\cdot \left({1\over\sqrt{l}}\right)^{2/\alpha}\right),\label{eq:12}  
\end{align}
with high probability.
\end{theorem}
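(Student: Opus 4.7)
The plan is to turn the covertness constraint into an upper bound on the common transmit power $P$ and then apply a physical-layer sum-rate bound against the nearest-neighbor distance distribution. First I would argue that for $s\ge 1$, with high probability there exists a warden $w^\star$ at which $\sum_k d_{w^\star k}^{-\alpha}=\Omega(n^{\alpha/2})$, where the sum runs over all transmitting nodes (those outside every preservation region). Indeed, since $r_p=\Theta(n^{-(s/2+\epsilon')})\le n^{-(1/2+\epsilon')}$ for $s\ge 1$, the annulus of radii $r_p$ and $n^{-1/2}$ around any fixed warden contains $\Theta(1)$ transmitters in expectation, so that warden has a transmitter at distance $\Theta(n^{-1/2})$ with probability bounded away from zero; with $n^s\ge n$ wardens, a Poisson-type counting and union argument then produces at least one such $w^\star$ with high probability. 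The Gaussian KL computation of Section~\ref{sec4} reduces covertness at $w^\star$ to $GP\sum_k d_{w^\star k}^{-\alpha}=O(1/\sqrt{l})$, which combined with the above yields $P=O(1/(\sqrt{l}\,n^{\alpha/2}))$.

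Next, I would bound the aggregate throughput by a per-receiver rate sum. At each time $t$ and for any receiver $j$, the instantaneous rate is at most $\log(1+\mathrm{SNR}_j^{\max}(t))$, with $\mathrm{SNR}_j^{\max}(t)=GP/((d_j^{\min}(t))^\alpha N_0)$ and $d_j^{\min}(t)$ the distance to its nearest transmitter (interference only reduces the SINR and can be dropped for an upper bound); since every source--destination bit is carried by at least one hop, $T(n,s)\le \sum_j \EE[\log(1+\mathrm{SNR}_j^{\max})]$ by the SSS/ergodicity assumption. To compute the expectation, I would use $n\pi d^2\sim\mathrm{Exp}(1)$ for the nearest-neighbor squared distance and set $v:=GP(n\pi)^{\alpha/2}/N_0$, so that the per-receiver expectation becomes $\EE_{u\sim\mathrm{Exp}(1)}[\log(1+v/u^{\alpha/2})]$. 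Splitting at $u=v^{2/\alpha}$ into a high-SNR regime (where $\log(1+\cdot)\approx\log v-(\alpha/2)\log u$) and a low-SNR regime (where $\log(1+\cdot)\approx v/u^{\alpha/2}$), both contribute $\Theta(v^{2/\alpha})$ upon integration against the exponential density. Substituting the power bound from the previous step gives $v=O(1/\sqrt{l})$ and hence $\EE[\log(1+\mathrm{SNR}_j^{\max})]=O(l^{-1/\alpha})$, so $T(n,s)\le n\cdot O(l^{-1/\alpha})=O(n^{1+\epsilon}(1/\sqrt{l})^{2/\alpha})$, where the $n^\epsilon$ factor absorbs the standard concentration of $\sum_j$ around its mean (using that each term is at most $O(\log n)$ because the minimum pairwise distance is $\Omega(n^{-1})$ with high probability).

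The main technical obstacle lies in the expectation step. Bounding $\log(1+\mathrm{SNR})$ by its ``typical'' value $\sim 1/\sqrt{l}$ would produce the apparent bound $n/\sqrt{l}$, which is strictly smaller than the achievability $\Theta(n/l^{1/\alpha})$ of Theorem~\ref{thm2} (since $\sqrt{l}\ge l^{1/\alpha}$ for $\alpha>2$) and therefore cannot be a valid upper bound; the correct $v^{2/\alpha}$ scaling arises from rare events where a receiver sits unusually close to a transmitter, and it is precisely these rare large-SNR contributions that produce the exponent $2/\alpha$ in $(1/\sqrt{l})^{2/\alpha}$. A secondary subtlety will be propagating the per-snapshot power bound into a valid bound on the common $P$ holding over any window of $l$ channel uses under the SSS/ergodic mobility of the nodes and (possibly) the wardens, which requires checking that the ``good warden'' event holds uniformly often enough under the mobility process.
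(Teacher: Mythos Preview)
Your two-step skeleton (covertness $\Rightarrow$ power cap, then a nearest-neighbor rate bound with an expectation split) is exactly the paper's route. Two points of execution differ, and one of them is a genuine gap in your argument.

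For the power cap, the paper (Lemma~\ref{lem8}) does not hunt for a lucky warden. It fixes an arbitrary warden $w$, partitions the disk into annuli of width $n^{-1/2}$ (the innermost one of outer radius $n^{-1/2}\log n$), and lower-bounds $\rho_{w,u}$ by the contribution of the $\Theta((\log n)^2)$ transmitters that sit in the innermost annulus w.h.p.\ (Corollary~\ref{cor1}); this gives $\rho_{w,u}\ge K'_4 P_{\rm tx}\,n^{\alpha/2}(\log n)^{2-\alpha}$ at \emph{every} channel use, hence the same lower bound on $\bar\rho_w$, and then \eqref{eq:32} forces $P_{\rm tx}=O(l^{-1/2}n^{-\alpha/2+\epsilon})$. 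Your existence argument reaches the same $\Omega(n^{\alpha/2})$ scale from a single close transmitter, but, as you flag, the identity of $w^\star$ may change across the $l$ channel uses; the uniform ring-sum bound is precisely what eliminates that issue.

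The real gap is in the rate step. ``Dropping interference'' upper-bounds only a \emph{single-stream} SINR at receiver $j$; it does not upper-bound the cut-set rate into $j$, which for a general scheme is the MISO quantity $\log\!\bigl(1+\tfrac{P_{\rm tx}}{N_0}\sum_{i\neq j} G\,d_{ji}^{-\alpha}\bigr)$ and can strictly exceed $\log(1+\mathrm{SNR}_j^{\max})$. The paper takes the dual SIMO cut out of each source (Lemma~\ref{lem9}) and then applies a second ring-sum argument (as in Lemma~\ref{lem6}) to show $\sum_{i\neq j} d_{ji}^{-\alpha}\le K n^{\epsilon}(d_j^{\min})^{-\alpha}$; only after that reduction does the nearest-neighbor SNR control the rate up to an $n^{\epsilon}$ factor. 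Once you insert that missing step, your split of the expectation at $u=v^{2/\alpha}$ coincides with the paper's split at $d_s<P_{\rm tx}^{1/\alpha}$ (equations~\eqref{eq:114}--\eqref{eq:118}), and substituting $P_{\rm tx}=O(l^{-1/2}n^{-\alpha/2+\epsilon})$ yields $n\,P_{\rm tx}^{2/\alpha}=O(l^{-1/\alpha})$, i.e.\ \eqref{eq:12}.
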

We note that the aggregate throughput scaling in Theorem \ref{thm1} for $0<s<1$ is tight under the assumption that each node not contained in the regions of radius $\Theta(n^{-(\frac{s}{2}+\epsilon')})$ around each warden  for an arbitrarily small $\epsilon'>0$  uses the same power at each channel use. The radius of $\Theta(n^{-(\frac{s}{2}+\epsilon')})$  is set as the same with that of the preservation region introduced in our achievability scheme. Hence, for $0<s<1$, our proposed scheme is scaling-optimal if the nodes outside the preservation regions are not allowed to change the transmit power over time. However, there exists a gap between the lower and upper bounds in Theorems \ref{thm2} and \ref{thm4} for $s\geq 1$. Such a non-tightness comes from the gap between the pessimistic and optimistic derivations of the distance between a sender and the nearest warden.

\begin{remark}\label{remark_1}
Let us compare our results to the case without node mobility \cite{Cho:2019}, which proposed HC \cite{Ozgur:2007} and MH \cite{Gupta:2000} based schemes. 
Interestingly, our throughput scaling with node mobility is strictly higher than that without mobility. The HC scheme contains long-range MIMO transmissions and hence the received SNR is much smaller compared to short-range transmissions for the same transmit power determined from the covertness constraint, which degrades the throughput scaling and results in a throughput gap that increases as $\alpha$ increases. On the other hand, the MH scheme consists of small-range transmissions, but it requires multiple hops for a source message to finally reach its destination, which degrades the throughput scaling and results in the throughput gap of $\Omega(n^{1/2})$. Our scheme performs better as the communication from a source to its destination consists of two-hop small-range transmissions, by exploiting the node mobility. 
\end{remark}

\section{Sufficient and Necessary Conditions of Covertness}\label{sec4}
In this section, we derive sufficient and necessary conditions for the covertness constraint, which are used for the derivations of lower and upper bounds on the aggregate throughput scaling, respectively. 
\subsection{Sufficient Condition of Covertness}\label{sec4A}
In the proposed scheme in Section \ref{sec5B}, each node uses i.i.d. complex Gaussian codebook. For such a code,  the Kullback-Leibler divergence \eqref{eq:5} is upper-bounded as:
\begin{align}
D(Q_{Z_w^l} \| Q_{N'_{w}}^{\times l}) &\overset{(a)}=\sum_{u=1}^{l}D(Q_{Z_{w,u}} \| Q_{N'_{w}}) \\
&= \sum_{u=1}^{l}D(\cC\cN(0,\rho_{w,u}+N_0) \| \cC\cN(0,N_0))\label{eq:13.1} \\
&\leq l\cdot D(\cC\cN(0,\rho_{wm}+N_0) \| \cC\cN(0,N_0)) \label{eq:13}\\
&\overset{(b)}= l \cdot \left({\rho_{wm}\over N_0}-\log{{N_0+\rho_{wm}}\over N_0}\right)\label{eq:14}   \\
&\overset{(c)}\leq l \cdot \left({\rho_{wm}\over N_0}-\left({\rho_{wm}\over N_0}-{\rho_{wm}^2\over{2 N_0^2}}\right)\right)\label{eq:15} \\
&= {{l\cdot \rho_{wm}^2}\over{2 N_0^2}},\label{eq:16} 
\end{align}
where $Z_{w,u}$ is the received signal of warden $w$ at channel use $u$, $\rho_{w,u}$ is the power of $Z_{w,u}$, and $\rho_{wm}$ is defined by $\max(\rho_{w,1}, ...,\rho_{w,l} )$. Here, $(a)$ is due to the use of the  i.i.d. complex Gaussian codebook, $(b)$ is proved in {\cite[Equation (80)]{Ligong:2016}}, and  $(c)$ follows from $\log(1+a) \geq a -{a^2\over 2}$ for $a>0$.

From \eqref{eq:16}, the covertness constraint \eqref{eq:5} is satisfied if the following inequality holds
\begin{align}
\rho_{wm} &\leq \sqrt{2} N_0 \sqrt{\delta \over l} .\label{eq:18}
\end{align}
\subsection{Necessary Condition of Covertness}\label{sec4B}
 The Kullback-Leibler divergence \eqref{eq:5} can be lower-bounded as:
\begin{align}
D&(Q_{Z_w^l} \|Q_{N'_{w}}^{\times l}) = -\hh (Z_w^l)+{\EE}_{{Z_w^l}}\left[\log{1 \over Q_{N'_w}^{\times l}(Z_w^l)}\right]\label{eq:19} \\
&= \sum_{u=1}^{l}\left( -\hh (Z_{w,t}|Z_w^{u-1})+{\EE}_{{Z_{w,u}}}\left[\log{1 \over Q_{N'_{w}}(Z_{w,u})}\right]\right)\label{eq:20}\\
&\geq \sum_{u=1}^{l}\left( -\hh (Z_{w,u})+{\EE}_{{Z_{w,u}}}\left[\log{1 \over Q_{N'_{w}}(Z_{w,u})}\right] \right)\label{eq:21} \\
&=\sum_{u=1}^{l} D(Q_{Z_{w,u}} \|Q_{N'_{w}}) \label{eq:22}\\
&\overset{(a)}\geq l \cdot D(Q_{\bar Z_{w}} \|Q_{N'_{w}}),\label{eq:23}
\end{align}
where $Z_w^{u-1}$ is the received signal of warden $w$ up to channel use $u-1$ and $Q_{\bar Z_{w}} = {1\over l}\sum_{u=1}^{l} Q_{Z_{w,u}}$ is the average distribution of received signal at warden $w$ over $l$ channel uses. The inequality $(a)$ is due to the convexity of Kullback-Leibler divergence.   

By \eqref{eq:23}, if the covertness constraint \eqref{eq:5} is satisfied, then it implies
\begin{align}
D(Q_{\bar Z_{w}} \|Q_{N'_{w}}) \leq {\delta \over l}.\label{eq:24}
\end{align}
Furthermore, the marginalized covertness constraint \eqref{eq:24} can be lower-bounded as:
\begin{align}
D(Q_{\bar Z_{w}} &\|Q_{N'_{w}}) =  -\hh({\bar Z_w})+{\EE}_{\bar Z_w}\left[\log{1 \over Q_{N'_{w}}(\bar Z_{w})}\right]\label{eq:25} \\
&= -\hh({\bar Z_w}) + {\EE}_{\bar Z_w}\left[\log{\left(\pi N_0 \exp\left({|\bar Z_w|^2 \over N_0}\right)\right)}\right] \label{eq:26}\\
&= -\hh({\bar Z_w}) + \log{\pi N_0} + {\EE}_{\bar Z_w}\left[{|\bar Z_w|^2 \over N_0}\right] \label{eq:27}\\
&\overset{(a)}= -\hh({\bar Z_w}) + \log{\pi N_0} + {{\bar \rho_w + N_0 }\over N_0} \label{eq:28}\\
\begin{split}\label{eq:29}
&\overset{(b)}\geq -\log{(\pi e(\bar\rho_w+N_0))}+\log{\pi N_0}\\
&+{{\bar \rho_w + N_0 }\over N_0}
\end{split}
\\
&={{\bar \rho_w }\over N_0} -\log{{\bar\rho_w+N_0}\over N_0},\label{eq:30}
\end{align}
where $\bar\rho_w =  {1\over l}\sum_{u=1}^{l} \rho_{w,u} $ is the average received power at warden $w$ over $l$ channel uses. Here, $(a)$ is because ${\EE}_{\bar Z_w}[{|\bar Z_w|^2 }]$ is same with ${\bar \rho_w + N_0 }$ since $\bar Z_w$ has the average distribution of $Z_{w,1},...,Z_{w,l}$ and $(b)$ is since the differential entropy is maximized with Gaussian distribution when the second moment is fixed. Since ${\bar \rho_w }$ goes to zero as $l$ goes to infinity, by Taylor expansion,  \eqref{eq:30} becomes
\begin{align}
D(Q_{\bar Z_{w}} \|Q_{N'_{w}}) \geq {{\bar \rho_w^2 }\over 2N_0^2} + o({\bar \rho_w^2 }).\label{eq:31}
\end{align}
Consequently, by \eqref{eq:24} and \eqref{eq:31}, if the covertness constraint \eqref{eq:5} is satisfied, then it implies
\begin{align}
{\bar \rho_w } \leq \sqrt{2}N_0\sqrt{\delta \over l}+o(l^{-1/2}).\label{eq:32}
\end{align}
\section{Achievability}\label{sec5}
In this section, we prove Theorems \ref{thm1} and \ref{thm2}. We first introduce preservation regions in Section \ref{sec5A} and explain our  two-hop scheme in Section \ref{sec5B}. Then,  the achievable aggregate throughput scalings are  derived in Sections \ref{sec5C} and \ref{sec5D}. We note that our proof does not rely on whether the wardens have mobility or not.

\subsection{Preservation Region}\label{sec5A}
In Section \ref{sec4A}, we show that a  sufficient condition for the covertness constraint is to make the received power at each warden less than some threshold. To increase the transmission power while keeping the received power at the wardens small, we introduce a preservation region of certain radius around each warden, in which the senders do not transmit. Let $r_p$ denote the radius of the preservation regions. As we increase $r_p$, the nodes outside the preservation regions can transmit with a higher power, but the area $\epsilon(n_w, r_p)=\Theta(n_wr_p^2)$ of all the preservation regions also increases. Hence, we set $r_p= \Theta(n^{-({s\over2}+\epsilon)})$ for an arbitrarily small $\epsilon>0$, which is the maximum radius while satisfying $\epsilon(n_w, r_p)\rightarrow 0$. 
In the scenarios \cite{Cho:2019}, \cite{Jeon:2011} where the node locations are fixed, such an introduction of preservation regions results in outage, i.e., $\epsilon(n_w, r_p)$ fraction of nodes do not participate in the whole communication at all. However, for our scenario where the nodes have mobility, each node is  in outage only for $\epsilon(n_w, r_p)$ fraction of time and hence there are no nodes in outage. 
\subsection{Two-Hop Scheme}\label{sec5B}
In the following, we explain our two-hop scheme. The scheme is modified from the scheme in \cite{David:2002} by considering the covertness constraint and preservation regions. 
\begin{itemize}
    \item For each time $t$, divide $n$ nodes into $\theta n$ senders and $(1-\theta)n$ receivers, where $0<\theta<1$ is fixed for the whole communication time.
    \item The sender-receiver pairs are determined in a way that each sender is paired with the nearest receiver for each time $t$. This way is called as sender-centric and described in Fig. \ref{fig2}.
    \item Each sender not in a preservation region transmits to its receiver using an i.i.d. complex Gaussian codebook with zero mean and variance of $P_{\rm{tx}}$, which is determined in Section \ref{sec5C} to satisfy the covertness constraint. The senders in a preservation region do not transmit. 
    \item The overall communication is divided into two phases: phase 1 is activated in odd times and phase 2 is activated in even times.  The data packets that the senders transmit to their receivers in each phase are as follows.  
    \begin{enumerate}
        \item Phase 1: The senders and the receivers play the roles of sources and relays, respectively. As shown in Fig. \ref{fig3}, each sender transmits its own source data packets (that have not been transmitted before) to its receiver. If the receiver is the final destination, this corresponds to direct transmission. Otherwise, the receiver keeps this packet and relays to the final destination in phase 2. 
        \item Phase 2: The senders and the receivers play the roles of relays and destinations, respectively. As shown in Fig. \ref{fig4}, each sender selects and transmits the data packet destined for the receiver among the stored data packets. Direct transmission is also possible if the sender is the source node of the receiver node. 
    \end{enumerate}
\end{itemize}


\begin{figure} 
\centering
 \includegraphics[width=0.9\columnwidth]{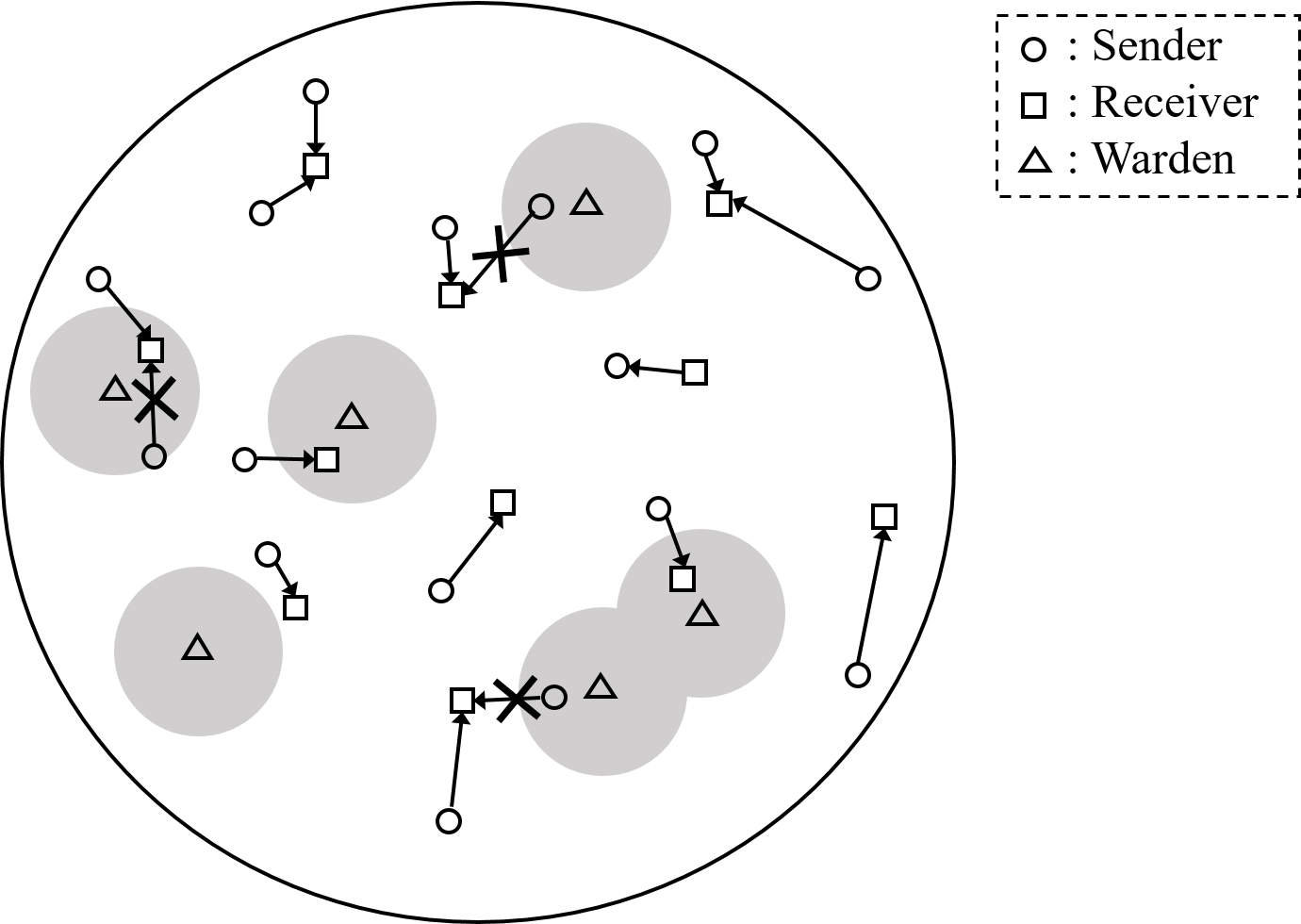}
\caption{Each sender transmits data packets to the nearest receiver when the sender is not in a (shaded) preservation region.}\label{fig2}
\end{figure}

We note that this two-hop scheme exploits multi-user diversity by letting some nodes operate as relays. In the absence of the covertness constraint,  it was shown in  \cite{David:2002} that such a scheme can achieve the maximum throughput $\Theta(n)$ because the source can transmit data packets independently of the distance to the destination by transmitting to a relay close to the source.  In \cite{David:2002}, it is shown that the aggregate throughput of  the direct transmission scheme, where each source transmits data packets only when the destination is sufficiently close, cannot achieve $\Theta(n)$ because it dose not use the multi-user diversity in transmitting data packets.

Although the network operation of our scheme seems to be similar with that in  \cite{David:2002}, there are some technicalities different from  \cite{David:2002} due to the presence of the covertness constraint. First, the received power at each warden is precisely evaluated to determine the allowable transmit power at the senders in Section~\ref{sec5C}. As the transmit power is severely constrained due to the covertness constraint, the distance between a sender-receiver pair affects the order of the point-to-point communication rate between them. By taking this fact into account, we carefully analyze the communication rate based on the distribution of the distance between a sender-receiver pair in Section \ref{sec5D}. 

Note that at the beginning of the communication, the senders might have no data packet destined for the receiver in phase~2. However, as the communication process proceeds and phase~1 takes place  several times, every node gradually collects and stores data packets of all the other nodes. In the following proof, we assume such a steady state and hence assumes that each sender has a data packet destined for the receiver in phase~2.

\begin{figure}
\centering
\includegraphics[width=0.9\columnwidth]{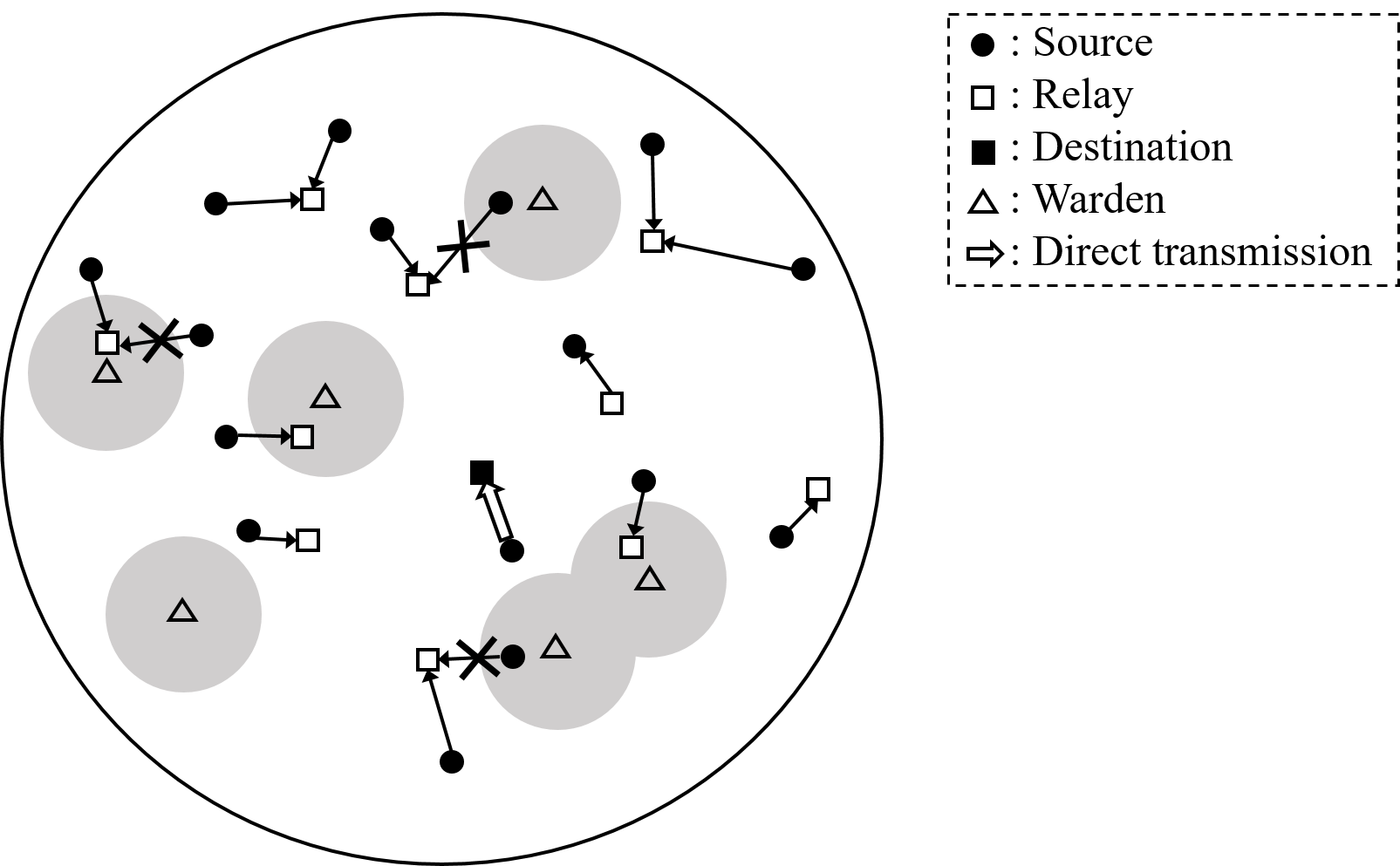}
\caption{In phase 1, the senders and the receivers play the roles of sources and relays, respectively. Each sender, not in a (shaded) preservation region, transmits its own source data packets to its receiver. If the receiver is the final destination, this corresponds to direct transmission. }\label{fig3}
\end{figure}
\begin{figure}
\centering
\includegraphics[width=0.9\columnwidth]{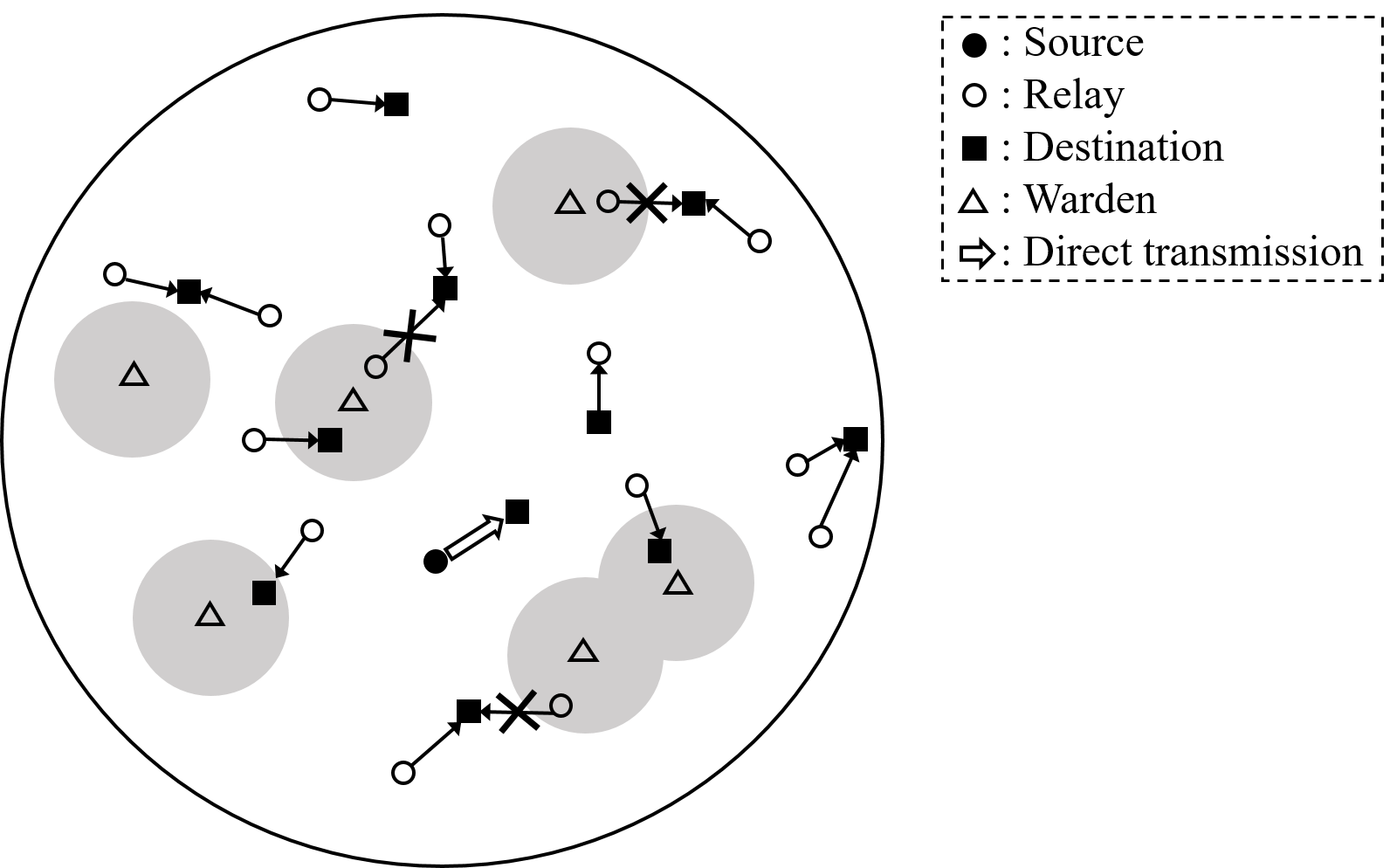}
\caption{In phase 2, the senders and the receivers play the roles of relays and destinations, respectively. Each sender, not in a (shaded) preservation region, selects and transmits the data packet destined for the receiver among the stored data packets. Direct transmission is also possible if the sender is the source node of the receiver node. }\label{fig4}
\end{figure}

\begin{remark}\label{remark_3}
The sender-receiver pairs can be alternatively determined in a way that each receiver is paired with the nearest sender for each time $t$ (receiver-centric). It can be checked that Theorems \ref{thm1} and \ref{thm2} continue to hold for the scheme operating according to the receiver centric basis.
\end{remark}

\subsection{Allowable Transmission Power}\label{sec5C}
In  this subsection, we derive an allowable transmission power. We start by stating how many nodes are inside a certain area with high probability. 
\begin{lemma}\label{lem1}
Suppose that $n$ nodes are uniformly and independently distributed in a unit area. Then, the number of nodes in a region of area $A(n)$ is between $((1-\delta)nA(n),(1+\delta)nA(n))$ with a probability larger than $1-{{1-A(n)}\over{\delta^2 n A(n)}}$ for any $\delta>0$.
\end{lemma}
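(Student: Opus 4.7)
The plan is to apply Chebyshev's inequality to the binomial-like count of nodes falling in the given region. For $i = 1, \dots, n$, let $X_i$ be the indicator that the $i$-th node lies inside the region of area $A(n)$. Since the nodes are placed uniformly and independently in a unit area, the $X_i$'s are i.i.d.\ Bernoulli random variables with success probability $A(n)$ (here using that the total area is $1$). Let $X = \sum_{i=1}^n X_i$ denote the total number of nodes in the region.

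First I would compute the mean and variance of $X$. By linearity, $\EE[X] = nA(n)$, and by independence, $\Var(X) = nA(n)(1-A(n))$. Next I would apply Chebyshev's inequality to the deviation of $X$ from its mean with threshold $\delta n A(n)$:
\begin{align}
\Pr\bigl(|X - nA(n)| \geq \delta n A(n)\bigr) \leq \frac{\Var(X)}{(\delta n A(n))^2} = \frac{nA(n)(1-A(n))}{\delta^2 n^2 A(n)^2} = \frac{1-A(n)}{\delta^2 n A(n)}.
\end{align}
Taking the complement of this event yields
\begin{align}
\Pr\bigl(X \in ((1-\delta)nA(n),\, (1+\delta)nA(n))\bigr) \geq 1 - \frac{1-A(n)}{\delta^2 n A(n)},
\end{align}
which is exactly the claim.

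There is essentially no main obstacle here; the proof is a one-line Chebyshev argument once the indicator representation is set up. The only subtlety worth noting is that the bound is informative only when $nA(n) \to \infty$ (so that the failure probability vanishes), which is the regime in which the lemma will later be invoked when estimating the number of nodes in preservation regions and in small cells of the network.
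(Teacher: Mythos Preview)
Your proof is correct and follows essentially the same approach as the paper: represent the count as a sum of i.i.d.\ Bernoulli indicators, compute the mean $nA(n)$ and variance $nA(n)(1-A(n))$, and apply Chebyshev's inequality with threshold $\delta nA(n)$. The paper's argument is identical up to notation.
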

\begin{proof}
We note that the number of nodes in a region of area $A(n)$ corresponds to a sum of  i.i.d. Bernoulli random variables \cite{Ozgur:2007}. The probability that a node is in the region is $A(n)$. Hence, the number of nodes in the region can be expressed as $\sum_{i=1}^{n}B_i$ where $B_i$'s are i.i.d. Bernoulli random variables with $p(B_i=1)=A(n)$. Then,
\begin{align}
&p(|\mbox{number of nodes in }A(n)-nA(n)|>\delta nA(n)) \label{eq:36}\\
&=p\left(\left|\sum_{i=1}^{n}B_i-nA(n)\right|>\delta nA(n)\right)\label{eq:38} \\
\begin{split}\label{eq:39}
&=p\left(\left|\sum_{i=1}^{n}B_i-\EE\left(\sum_{i=1}^{n}B_i\right)\right|    \right. \\
&\ \ \ \ \ \ \ \ \ \ \ \ \  \left. >\frac{{\delta\sqrt{ nA(n)}}}{\sqrt{1-A(n)}} \sqrt{\Var \left(\sum_{i=1}^{n}B_i\right)}\right)
\end{split}\\
&\overset{(a)}<{{1-A(n)}\over{\delta^2 nA(n)}},\label{eq:40}
\end{align}
for any $\delta>0$, where $\EE(\sum_{i=1}^{n}B_i)=nA(n)$, and $\Var (\sum_{i=1}^{n}B_i)=nA(n)(1-A(n))$ because $\Var(B_i)=A(n)(1-A(n))$ and $B_1, ..., B_n$ are  i.i.d. random variables. The inequality $(a)$ is by Chebyshev's inequality. 
\end{proof}

The following corollary is a direct consequence of Lemma~\ref{lem1}.
\begin{corollary}\label{cor1}
Suppose that  $n$ nodes are uniformly and independently distributed in a unit area. Then, the number of nodes in the region of area  $A(n)=\omega(1/n)$ is between $((1-\delta)nA(n),(1+\delta)nA(n))$ with high probability for any $\delta>0$.
\end{corollary}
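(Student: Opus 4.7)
The plan is to simply invoke Lemma~\ref{lem1} and verify that the failure probability it supplies tends to zero under the stronger hypothesis $A(n)=\omega(1/n)$. First I would apply Lemma~\ref{lem1} verbatim to the region of area $A(n)$, which yields that the node count lies in $((1-\delta)nA(n),(1+\delta)nA(n))$ with probability at least $1-\frac{1-A(n)}{\delta^2 n A(n)}$ for any fixed $\delta>0$.

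Next I would bound the error term. Since $0<A(n)\le 1$, we have $1-A(n)\le 1$, hence
\begin{equation*}
\frac{1-A(n)}{\delta^2 n A(n)} \;\le\; \frac{1}{\delta^2\, n A(n)}.
\end{equation*}
The hypothesis $A(n)=\omega(1/n)$ means exactly that $nA(n)\to\infty$ as $n\to\infty$, so the right-hand side tends to $0$ and the claimed probability tends to $1$. This is the entirety of the argument; there is no real obstacle, since Lemma~\ref{lem1} already carries the Chebyshev estimate and the corollary only specializes its hypothesis to the regime in which the bound becomes informative. I would keep the write-up to two or three lines and emphasize that $\delta$ is an arbitrary fixed positive constant (independent of $n$), which is what turns the quantitative bound of Lemma~\ref{lem1} into a high-probability statement in the sense used throughout the paper.
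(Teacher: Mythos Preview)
Your proposal is correct and matches the paper's approach exactly: the paper simply states that the corollary is a direct consequence of Lemma~\ref{lem1}, and your write-up spells out precisely that consequence by checking that the Chebyshev error term $\frac{1-A(n)}{\delta^2 n A(n)}$ vanishes when $nA(n)\to\infty$.
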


Now, we show an allowable transmission power from the sufficient condition for the covertness constraint in Section \ref{sec4A} by deriving an upper bound on the received power at a warden. To derive the upper bound,  we first consider a set of disjoint rings, centered at the warden, that covers the whole network as shown in Figs. \ref{fig5} and \ref{fig6}.  Then we add up the received power at the warden that the senders in each ring contribute to. In this procedure, we bound the number of senders in each ring by using Corollary \ref{cor1}. Since the corollary is applicable only when the area is $\omega(1/n)$,  the width of the smallest ring depends on whether the area of a preservation region is $\omega(1/n)$ or not (corresponding to whether  $0<s<1$ or $s\geq 1$), which results in different bounds. The following lemmas present allowable transmission powers satisfying the covertness constraint for $0<s<1$ and $s\geq 1$. 

\begin{lemma}\label{lem2}
Let each node transmit using complex Gaussian coodbook with zero mean and variance of $P_{\rm{tx}}$. If $0<s<1$, then $P_{\rm{tx}}=\Theta(l^{-{1\over2}}{n^{-({s\over2}(\alpha-2)+1)-\epsilon}})$ {for any $\epsilon>0$} satisfies the covertness constraint with high probability.
\end{lemma}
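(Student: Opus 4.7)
\textbf{Proof plan for Lemma~\ref{lem2}.} The plan is to invoke the sufficient condition \eqref{eq:18}, which reduces covertness at warden $w$ to ensuring $\rho_{w,u} \le \sqrt{2}\,N_0\sqrt{\delta/l}$ at every channel use $u$ in the test window. Because each active sender uses an i.i.d.\ complex Gaussian codebook of variance $P_{\rm{tx}}$ and channel phases are independent, the interference power at warden $w$ is deterministic given the node positions,
\[
\rho_{w,u} \;=\; G\, P_{\rm{tx}} \sum_{k \in \cS_u} d_{wk}[u]^{-\alpha},
\]
where $\cS_u$ denotes the set of senders lying outside every preservation region. I would upper-bound this sum by decomposing the unit disk around $w$ into the dyadic annuli $R_j = \{2^{j-1} r_p \le d_{wk}[u] < 2^j r_p\}$ for $j = 1, 2, \ldots, J$, where $J = O(\log n)$ rings suffice to cover the network. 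Ring $R_j$ has area $\Theta(4^j r_p^2)$ and minimum distance $2^{j-1} r_p$ to $w$, so any node in $R_j$ contributes at most $G P_{\rm{tx}}(2^{j-1} r_p)^{-\alpha}$ to the interference.

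For $0<s<1$, the preservation-region area satisfies $r_p^2 = \Theta(n^{-s - 2\epsilon'}) = \omega(1/n)$, so Corollary~\ref{cor1} applies ring-by-ring: with high probability each $R_j$ contains $\Theta(n\, 4^j r_p^2)$ senders. Summing the resulting geometric series, which converges because $\alpha > 2$, yields
\begin{align*}
\rho_{w,u}
&= O\!\left( G\, P_{\rm{tx}}\, n\, r_p^{2-\alpha} \sum_{j=1}^{\infty} 2^{j(2-\alpha)} \right) \\
&= \Theta\!\left( G\, P_{\rm{tx}}\, n^{\,1 + (s/2 + \epsilon')(\alpha - 2)} \right).
\end{align*}
Enforcing \eqref{eq:18} and solving for $P_{\rm{tx}}$ gives
\[
P_{\rm{tx}} \;=\; \Theta\!\left( l^{-1/2}\, n^{-\left(\tfrac{s}{2}(\alpha-2) + 1\right) - \epsilon'(\alpha-2)} \right),
\]
which matches the claim after setting $\epsilon = \epsilon'(\alpha-2)$, still arbitrarily small.

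The remaining step, which I expect to be the main technical obstacle, is lifting this per-(ring, warden, slot) concentration to a \emph{simultaneous} statement over all $n_w = \Theta(n^s)$ wardens, all time slots overlapping the $l$-channel-use test window, and all $O(\log n)$ rings per warden. The Chebyshev-based estimate in Lemma~\ref{lem1} supplies only a $O(1/(nA))$ per-event failure probability, which is too weak for a direct union bound when $n^s$ and $l$ are both polynomial in $n$. I would address this by sharpening the counting step via a standard Chernoff bound for sums of $n$ i.i.d.\ Bernoulli$(A)$ variables: since $nA = \omega(n^{\,1 - s - 2\epsilon'})$ already for the innermost ring (using $s<1$), the per-event failure probability becomes stretched-exponentially small in $n$, easily absorbs a polynomial union bound, and delivers the desired high-probability conclusion uniformly over wardens and channel uses.
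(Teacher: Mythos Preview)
Your approach is correct and arrives at the same bound as the paper, but the mechanics differ in one place worth noting. The paper does not use dyadic annuli; it instead covers the region outside the preservation disk by rings of \emph{constant width} $n^{-1/2}$ centered at the warden, bounds the sender count in each ring via Corollary~\ref{cor1} (this is where $0<s<1$ enters, ensuring even the innermost ring has area $\omega(1/n)$), and then passes from the resulting sum to a Riemann integral $\int_{r_p}^{\Theta(1)} x^{-\alpha}\cdot n\,x\,dx = \Theta(n\,r_p^{2-\alpha})$. Your dyadic decomposition reaches the identical order via a convergent geometric series in $2^{j(2-\alpha)}$; it is arguably cleaner because only $O(\log n)$ rings are involved, which also lightens the eventual union bound. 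The paper's equal-width slicing, on the other hand, is slightly more robust to boundary effects and requires no separate argument that the series converges.

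On the simultaneity issue you flag: the paper simply asserts that since the per-$u$ bound holds ``for arbitrary channel use $u$,'' it holds for $\rho_{wm}$, without explicitly carrying out a union bound over the $n_w$ wardens and the time slots in the test window. Your observation that the Chebyshev tail from Lemma~\ref{lem1} is too weak for this, and that a Chernoff bound on the Bernoulli counts (yielding failure probability $\exp(-\Omega(nA))$ with $nA = \omega(n^{1-s-2\epsilon'})$) resolves it, is a genuine improvement in rigor over the paper's argument.
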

\begin{proof}
Let $s_i(u)$ and $c_w(u)$ be the locations of sender $i$ and warden $w$ at the $u$-th channel use the warden observes, respectively. As shown in Fig. \ref{fig5}, we consider a set of disjoint rings with width $n^{-1/2}$, centered at the warden, that covers the whole network except the preservation region around warden $w$. Let $R_{1\beta}$ denote the $(\beta+1)$-th smallest ring. Then, the received power at warden $w$ at channel use $u$, $\rho_{w,u}$, is upper-bounded as:
\begin{align}
&\rho_{w,u}\overset{(a)}\leq \sum_{i:|s_i(u)-c_w(u)|\geq r_p}P_{\rm{tx}}(i,u)\cdot {G\over{|s_i(u)-c_w(u)|^\alpha}}\label{eq:41}\\
&\overset{(b)}\leq\sum_{\beta=0}^{\kappa_1(n,u)}\sum_{i\in \cR_{1\beta}}P_{\rm{tx}}(i,u)\cdot {G\over(r_p+\beta n^{-1/2})^\alpha}\label{eq:41.1}\\
&\overset{(c)}\leq\sum_{\beta=0}^{\kappa_1(n,u)}P_{\rm{tx}}(1+\epsilon'')\cdot {G\over(r_p+\beta n^{-1/2})^\alpha}\cdot |\cR_{1\beta}|\label{eq:42}\\
&\overset{(d)}\leq\sum_{\beta=0}^{\kappa_1(n,u)}P_{\rm{tx}}(1+\epsilon'')\cdot {G\over(r_p+\beta n^{-1/2})^\alpha}\cdot (1+\delta)\theta nA(R_{1\beta})\label{eq:42.1}\\
\begin{split}\label{eq:43}
&\overset{(e)}\leq\sum_{\beta=0}^{\kappa_1(n,u)}P_{\rm{tx}}(1+\epsilon'')\cdot {G\over(r_p+\beta n^{-1/2})^\alpha}\\
& \ \ \ \  \cdot (1+\delta)\theta n 2\pi(r_p+(\beta+1)n^{-1/2})n^{-1/2}
\end{split}\\
&\overset{(f)}\leq\int_{r_p-n^{-1/2}}^{{1/\sqrt{\pi}}+|c_w(u)|} \!\!\!\!\!\!\!\!\!\!\!\!\!\!\!\!\!\!\! P_{\rm{tx}}(1+\epsilon'')\cdot{G\over x^\alpha}\cdot (1+\delta)\theta n2\pi(x+n^{-1/2})dx\label{eq:44}\\
&\leq K_1 P_{\rm{tx}} n r_p^{2-\alpha},\label{eq:45}
\end{align}
with high probability {for any $\epsilon''>0$ and $\delta>0$}, where $P_{\rm{tx}}(i,u)$ is the transmission power of sender $i$ at channel use $u$, $\kappa_1(n,u)={\lfloor{{1/\sqrt{\pi}+|c_w(u)|-r_p}\over n^{-1/2}}\rfloor}$ is the number of rings needed to cover the whole network, $\cR_{1\beta}$ is the set of the senders in $R_{1\beta}$, $A(R_{1\beta})$  is the area of $R_{1\beta}$, and {$K_1$ is a positive constant independent with $n$.} Here, $(a)$ is since the senders in the preservation regions do not transmit, $(b)$ is by assuming that the senders in each ring are at the boundary close to warden $w$,  $(c)$ is by using weak law of large numbers (WLLN) , $(d)$ is from Corollary \ref{cor1}, $(e)$ is by upper bounding $A(R_{1\beta})$, and $(f)$ is due to the Riemann sum. Since \eqref{eq:45} holds for arbitrary channel use $u$, $\rho_{wm}$ is upper-bounded as:
\begin{align}
\rho_{wm} \leq K_1 P_{\rm{tx}} n r_p^{2-\alpha}.\label{eq:46}
\end{align}
By \eqref{eq:18} and \eqref{eq:46}, the covertness constraint is satisfied if $K_1 P_{\rm{tx}} n r_p^{2-\alpha} \leq \sqrt{2} N_0 \sqrt{\delta \over l}$, or $P_{\rm{tx}}\leq K'_1 l^{-{1\over2}}n^{-1} r_p^{\alpha-2}$ for a constant $K'_1$ independent with $n$. Since $r_p=\Theta(n^{-(s/2+\epsilon')})$ for {any $\epsilon'>0$}, we conclude that  $P_{\rm{tx}}=\Theta(l^{-{1\over2}}{n^{-({s\over2}(\alpha-2)+1)-\epsilon}})$ for {any $\epsilon>0$} satisfies the covertness constraint.
\end{proof}

\begin{figure}
\centering
\includegraphics[width=0.9\columnwidth]{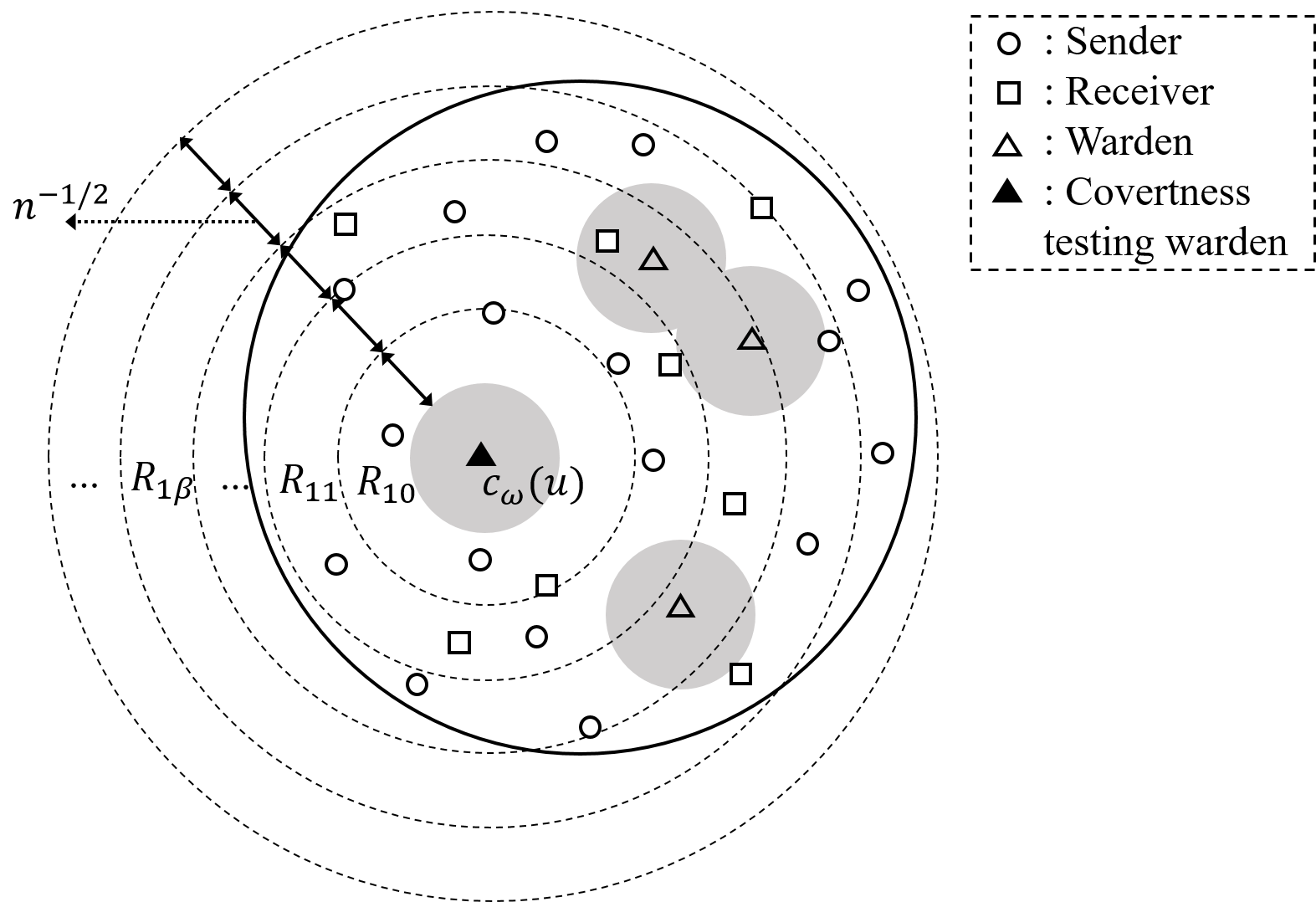}
\caption{A set of disjoint rings to derive an upper bound on the received power at warden $w$ for $0<s<1$. Here, $R_{1\beta}$ denotes the $(\beta+1)$-th smallest ring and each ring has width $n^{-1/2}$.} \label{fig5}
\end{figure}

\begin{figure}
\centering
\includegraphics[width=0.9\columnwidth]{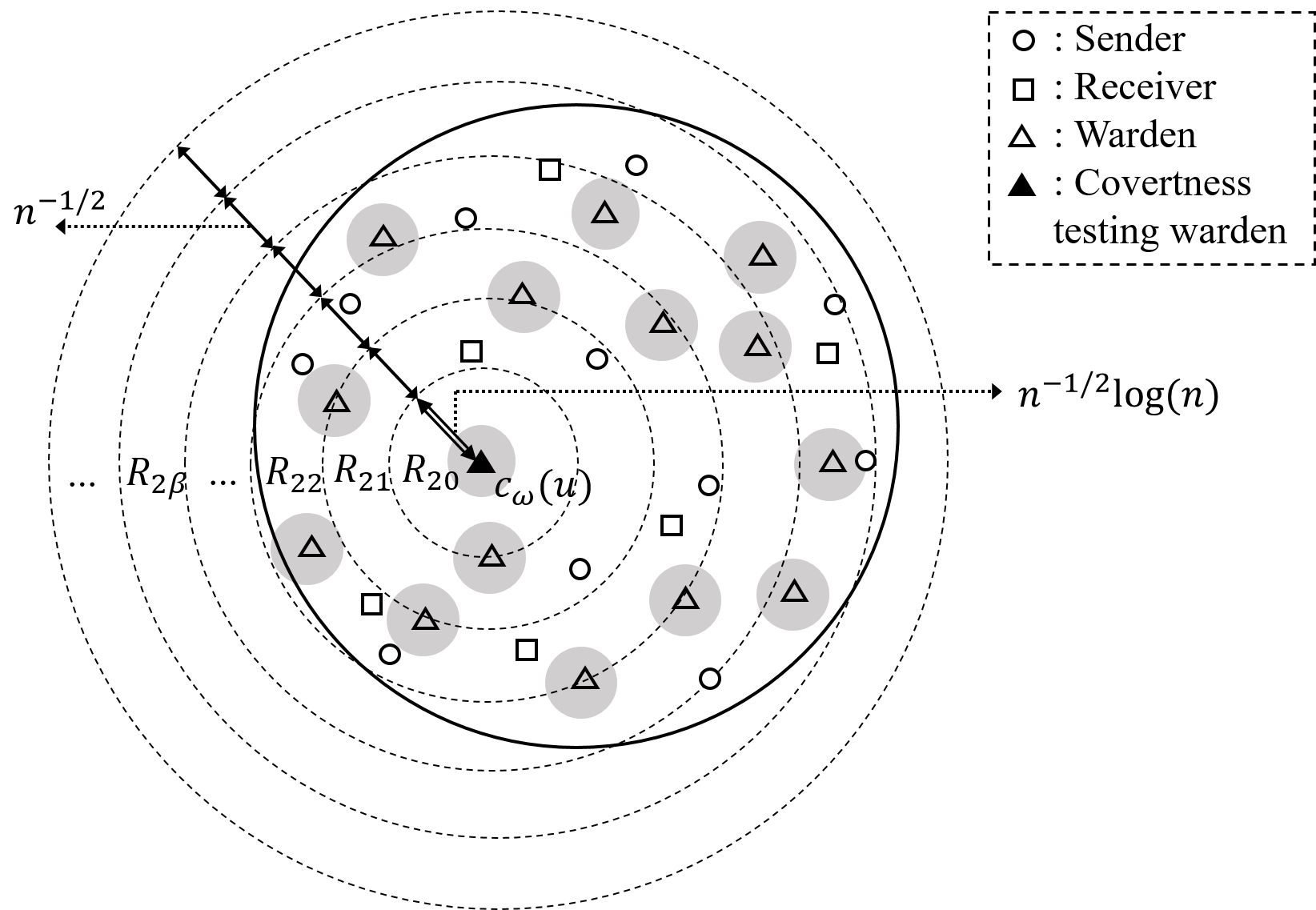} 
\caption{A set of disjoint rings to derive an upper bound on the received power at warden $w$ for $s\geq 1$.  Here, $R_{2\beta}$ denotes the $(\beta+1)$-th smallest ring and each ring except the smallest  has width $n^{-1/2}$. The smallest ring has width $n^{-1/2}\log n-r_p$.}\label{fig6}
\end{figure}

\begin{lemma}\label{lem3}
Let each node transmit  using complex Gaussian coodbook with zero mean and variance of $P_{\rm{tx}}$. If $s\geq1$, then $P_{\rm{tx}}=\Theta(l^{-{1\over2}}{n^{-{s\alpha\over2}-\epsilon}})$ {for any $\epsilon>0$} satisfies the covertness constraint with high probability.
\end{lemma}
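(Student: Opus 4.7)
The plan is to mirror the proof of Lemma \ref{lem2}, but to adjust the ring decomposition around each warden because for $s\geq 1$ the preservation region has area $\Theta(n^{-s-2\epsilon'})=O(1/n)$, so Corollary \ref{cor1} does not concentrate the sender count on a ring of width $n^{-1/2}$ placed right outside the preservation boundary. Following Fig.~\ref{fig6}, I would therefore take the innermost ring $R_{20}$ to extend from radius $r_p$ out to $n^{-1/2}\log n$, giving it area $\Theta(n^{-1}\log^2 n)=\omega(1/n)$ to which Corollary \ref{cor1} applies, while the outer rings $R_{2\beta}$, $\beta\geq 1$, keep the width $n^{-1/2}$ used in Lemma \ref{lem2}.

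Next, I would reuse steps $(a)$--$(f)$ of \eqref{eq:41}--\eqref{eq:44}: replace each sender's distance by its ring's inner radius, invoke WLLN and Corollary \ref{cor1} to upper bound the sender count in each ring by $(1+\delta)\theta n A(R_{2\beta})$ with high probability, and upper bound $A(R_{2\beta})$ by its outer radius formula. The innermost ring then contributes at most order $P_{\rm tx}\,\log^2 n \cdot r_p^{-\alpha}$ to $\rho_{w,u}$, whereas the Riemann-sum bound on the remaining rings yields order $P_{\rm tx}\,n\cdot (n^{-1/2}\log n)^{2-\alpha}$. A direct comparison, using $r_p^{-\alpha}=\Theta(n^{\alpha(s/2+\epsilon')})$ together with $s\geq 1$, shows that the innermost ring dominates the outer sum by a multiplicative factor of order $(\log n)^{\alpha}\cdot n^{\alpha(s-1)/2+\alpha\epsilon'}$, which grows without bound. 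Hence there is a constant $K_2$ independent of $n$ such that
\[
\rho_{wm}\leq K_2\, P_{\rm tx}\,\log^2 n \cdot n^{\alpha(s/2+\epsilon')},
\]
and plugging this into the sufficient condition \eqref{eq:18} yields $P_{\rm tx}=\Theta(l^{-1/2} n^{-\alpha s/2-\epsilon})$, where the sub-polynomial $\log^2 n$ factor and the $\alpha\epsilon'$ slack are absorbed into the arbitrary $\epsilon>0$ of the lemma.

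The main technical obstacle is choosing the width of the innermost ring: it must be large enough to make its area $\omega(1/n)$ so that Corollary \ref{cor1} concentrates, yet every extra factor in that width directly inflates the bound on $\rho_{wm}$ and therefore shrinks the admissible $P_{\rm tx}$. The choice $n^{-1/2}\log n$ is comfortable because the induced $\log^2 n$ prefactor is sub-polynomial and is swallowed by the $n^{-\epsilon}$ slack. The remaining delicate point is verifying that the innermost ring \emph{strictly} dominates the outer-ring integral rather than merely matching it, since only then does the final exponent stay at $\alpha s/2$ instead of drifting toward $\alpha/2$; this strict domination is precisely what the hypotheses $s\geq 1$ and $\epsilon'>0$ guarantee.
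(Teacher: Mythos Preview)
Your proposal is correct and follows essentially the same approach as the paper: the same ring decomposition with the innermost ring widened to outer radius $n^{-1/2}\log n$ so that Corollary~\ref{cor1} applies, the same separation of the $R_{20}$ contribution from the Riemann-sum bound on the outer rings, and the same observation that for $s\geq 1$ the $R_{20}$ term $\Theta(P_{\rm tx}(\log n)^2 r_p^{-\alpha})$ dominates, leading to $\rho_{wm}\leq K_2 P_{\rm tx}(\log n)^2 r_p^{-\alpha}$ and hence the claimed $P_{\rm tx}$. Your explicit verification that the innermost ring strictly dominates via the ratio $(\log n)^{\alpha} n^{\alpha(s-1)/2+\alpha\epsilon'}$ is exactly the content of step~$(b)$ in \eqref{eq:51}.
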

\begin{proof}
Let $s_i(u)$ and $c_w(u)$ be the locations of sender $i$ and warden $w$  at the $u$-th channel use the warden observes, respectively. The proof is similar with that of Lemma \ref{lem2}, but now we consider a set of disjoint rings where the smallest ring has width $n^{-1/2}\log n-r_p$ to apply Corollary \ref{cor1}, while the width of the other rings remains as  $n^{-1/2}$ as shown in Fig. \ref{fig6}. Let $R_{2\beta}$ denote the $(\beta+1)$-th smallest ring. Then, $\rho_{w,u}$ is upper-bounded as:
\begin{align}
&\rho_{w,u}\leq \sum_{i:|s_i(u)-c_w(u)|\geq r_p}\!\!\!\!\!P_{\rm{tx}}(i,u)\cdot {G\over{|s_i(u)-c_w(u)|^\alpha}}&& \label{eq:47}\\
\begin{split}\label{eq:48}
&\overset{(a)}\leq\sum_{i\in \cR_{20}}P_{\rm{tx}}(i,u)\cdot {G\over r_p^{\alpha}}+\sum_{\beta=1}^{\kappa_2(n,u)}\sum_{i\in \cR_{2\beta}}P_{\rm{tx}}(i,u)\\
&\cdot {G\over(n^{-1/2}\log n+(\beta-1) n^{-1/2})^\alpha}
\end{split}\\
\begin{split}\label{eq:48.1}
&\leq P_{\rm{tx}}(1+\epsilon'')\cdot {G\over r_p^{\alpha}}\cdot |\cR_{20}|+\sum_{\beta=1}^{\kappa_2(n,u)}P_{\rm{tx}}(1+\epsilon'') \\
&\cdot {G\over(n^{-1/2}\log n+(\beta-1) n^{-1/2})^\alpha}\cdot |\cR_{2\beta}|
\end{split}\\
\begin{split}\label{eq:48.2}
&\leq P_{\rm{tx}}(1+\epsilon'')\cdot {G\over r_p^{\alpha}}\cdot (1+\delta)\theta nA(R_{20})+\sum_{\beta=1}^{\kappa_2(n,u)}P_{\rm{tx}}(1+\epsilon'') \\
&\cdot {G\over(n^{-1/2}\log n+(\beta-1) n^{-1/2})^\alpha}\cdot (1+\delta)\theta nA(R_{2\beta})
\end{split}\\
\begin{split}\label{eq:49}
&\leq P_{\rm{tx}}(1+\epsilon'')\cdot {G\over r_p^{\alpha}}\cdot (1+\delta)\theta n2\pi(n^{-1/2}\log n)^2\\
&+\sum_{\beta=1}^{\kappa_2(n,u)} P_{\rm{tx}}(1+\epsilon'')\cdot {G\over(n^{-1/2}\log n+(\beta-1) n^{-1/2})^\alpha}\\
&\cdot (1+\delta)\theta n 2\pi(n^{-1/2}\log n+\beta n^{-1/2})n^{-1/2}
\end{split}\\
\begin{split}\label{eq:50}
&\leq P_{\rm{tx}}(1+\epsilon'')\cdot {G\over r_p^{\alpha}}\cdot (1+\delta)\theta n2\pi(n^{-1/2}\log n)^2 \\
&+\int_{n^{-1/2}(\log n -1)}^{{1/\sqrt{\pi}}+|c_w(u)|} \!\!\!\!\!\!\!\!\!\! P_{\rm{tx}}(1+\epsilon'')\cdot{G\over x^\alpha}\cdot (1+\delta)\theta n2\pi(x+n^{-1/2})dx
\end{split}\\
&\overset{(b)}\leq K_2 P_{\rm{tx}} (\log n)^2 r_p^{-\alpha},\label{eq:51}
\end{align}
with high probability {for any $\epsilon''>0$ and $\delta>0$,} where  $P_{\rm{tx}}(i,u)$ is the transmission power of sender $i$ at channel use $u$, $\kappa_2(n,u)={\lfloor{{1/\sqrt{\pi}+|c_w(u)|-n^{-1/2}\log n}\over n^{-1/2}}\rfloor}+1$ is the number of rings needed to cover the whole network, $\cR_{2\beta}$ is the set of the senders in $R_{2\beta}$, $A(R_{2\beta})$  is the area of $R_{2\beta}$, {and $K_2$ is a positive constant independent with $n$.} Here, $(a)$ is by separating $R_{20}$ region from the other regions and $(b)$ is since the first term of \eqref{eq:50} is dominant as $n$ goes to infinity. Because \eqref{eq:51} holds for arbitrary channel use $u$, $\rho_{wm}$ is upper-bounded as:
\begin{align}
\rho_{wm} \leq K_2 P_{\rm{tx}} (\log n)^2 r_p^{-\alpha}.\label{eq:52}
\end{align}
By \eqref{eq:18} and \eqref{eq:52}, the covertness constraint is satisfied if $P_{\rm{tx}}\leq K'_2 l^{-{1\over2}} (\log n)^{-2} r_p^{\alpha}$ where  $K'_2$ is a constant   independent with $n$. Since $r_p=\Theta(n^{-(s/2+\epsilon')})$ for {any $\epsilon'>0$}, we conclude that $P_{\rm{tx}}=\Theta(l^{-{1\over2}}{n^{-{s\alpha\over2}-\epsilon}})$ for {any $\epsilon>0$} satisfies the covertness constraint.
\end{proof}

\subsection{Aggregate Throughput}\label{sec5D}
In this subsection, we derive the achievable aggregate throughput in Theorems \ref{thm1} and \ref{thm2}. We first derive a feasible long-term throughput  for each sender-receiver pair, which we call pairwise throughput in short. A pairwise throughput $R_{\rm{pair}}(n,s)$  is said to be feasible if  
\begin{align}
\limit_{T\rightarrow\infty} {1\over T}{\sum_{t=1}^{T}R_{j\rm{th-pair}}(n,s,t)} \geq R_{\rm{pair}}(n,s), ~\forall j \label{eq:55}
\end{align}
 where $R_{j\rm{th-pair}}(n,s,t)$ is the throughput of the $j$th sender-receiver pair at time $t$. The following lemma shows a relationship between the aggregate throughput and the pairwise throughput. 

\begin{lemma}\label{lem4}
Let $R_{\rm{pair}}(n,s)$ be a feasible pairwise throughput. Then an aggregate throughput of $T(n,s)=\Theta(n) \cdot R_{\rm{pair}}(n,s)$ is achievable with high probability.
\end{lemma}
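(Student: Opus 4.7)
The plan is to prove the lemma in three steps: first count the active sender-receiver pairs per time slot, then translate their aggregate sender-receiver rate into an aggregate end-to-end rate, and finally invoke symmetry of the scheme together with ergodicity of the mobility process to distribute this aggregate uniformly across the $n$ source-destination pairs.

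For the first step, at each time slot there are $\theta n$ senders and $(1-\theta)n$ receivers. Applying Corollary~\ref{cor1} with $A(n)=\Theta(n_w r_p^2)=\Theta(n^{-2\epsilon'})$ (using $n_w=\Theta(n^s)$ and $r_p=\Theta(n^{-(s/2+\epsilon')})$) shows that only an $o(1)$ fraction of the $\theta n$ senders lies inside some preservation region with high probability, so $\Theta(n)$ sender-receiver pairs are simultaneously active. Since $R_{\rm{pair}}(n,s)$ is feasible by assumption, each active pair carries rate $R_{\rm{pair}}(n,s)$ in the long-term average, so the aggregate rate on sender-receiver links in a single slot is $\Theta(n\,R_{\rm{pair}}(n,s))$.

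Next, I would translate this sender-receiver sum-rate into a source-destination sum-rate. By construction, every source packet traverses at most two hops (phase 1: source $\to$ relay; phase 2: relay $\to$ destination), with some packets delivered in a single hop. Since the aggregate sender-receiver rate is $\Theta(n\,R_{\rm{pair}}(n,s))$ both per phase-1 slot and per phase-2 slot, and since in steady state the long-run phase-2 delivery rate matches the long-run phase-1 injection rate, the aggregate end-to-end throughput per slot is of order $\Theta(n\,R_{\rm{pair}}(n,s))$ (with the factor $1/2$ arising from phase alternation absorbed into $\Theta(\cdot)$).

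Finally, because the mobility process is strict-sense stationary and ergodic and because the rules for pairing and for preservation regions depend only on node positions, every source-destination pair is treated symmetrically. The ergodic theorem then yields that the long-term throughput of each pair $(j,k)$ equals the aggregate rate divided by $n$, namely $\Theta(R_{\rm{pair}}(n,s))$, and summing over the $n$ pairs gives $T(n,s)=\Theta(n)\,R_{\rm{pair}}(n,s)$. The main obstacle, inherited from \cite{David:2002}, is the verification that in steady state each relay buffer holds packets destined for each node with roughly equal long-run frequency, so that phase-2 transmission opportunities are shared fairly; this carries over to our setting because outage due to preservation regions depends only on relative sender-warden positions and not on source-destination labels, preserving the required symmetry across pairs.
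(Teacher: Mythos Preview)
Your proposal is correct and follows essentially the same approach as the paper's proof: both count the $\Theta(n)$ active senders outside preservation regions via Corollary~\ref{cor1}, invoke the steady-state assumption so that every phase-2 sender has a packet for its receiver, and absorb the factor $1/2$ from phase alternation into $\Theta(\cdot)$. The paper's proof is terser---it appeals directly to the uniform random choice of senders and the steady-state assumption rather than spelling out the symmetry/ergodicity argument you give in your third step---but the substance is the same.
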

\begin{proof}
As mentioned in Section \ref{sec5B}, we assume the steady state and hence assume that each sender has a data packet destined for its receiver in phase 2. Since the senders are uniformly and randomly chosen for each phase 1 and phase 2 and we are considering the throughputs in the long-term sense, the aggregate throughput equals to the half (because phase 2 occupies the half of the time) of the product of the number of senders in phase 2 and the pairwise throughput. The proof is completed by noting that  the number of senders in phase 2 is between $((1-\delta)\theta n(1-\epsilon(n_w, r_p)),(1+\delta)\theta n(1-\epsilon(n_w, r_p)))$ for any $\delta>0$, as $n$ goes to infinity by Corollary \ref{cor1}. 
\end{proof}
Now, we derive a feasible pairwise throughput. Since the nodes use the Gaussian codebook as described in Section \ref{sec5B}, $R_{j\rm{th-pair}}(n,s,t)$ is represented as:
\begin{align}
R_{j\rm{th-pair}}(n,s,t)=\log\left(1+{{P_{\rm{tx}}\cdot {r_j(t)}^{-\alpha}}\over{N_0+I_j(n,s,t)}} \right), \label{eq:56}
\end{align}
where $r_j(t)$ is the distance between the $j$th sender-receiver pair and $I_j(n,s,t)$ is the interference power at the $j$th receiver at time $t$. To derive $R_{\rm{pair}}(n,s)$, we use the following lemmas on the distribution of $r_j(t)$ and an upper bound on $I_j(n,s,t)$.

\begin{lemma}\label{lem5}
Let $F(z)$ be the cumulative distribution of the distance $z$ between a sender-receiver pair at time $t$. For $z=\Theta(n^{-\epsilon})$, $F(z)=1-\exp(-\pi z^2n(1-\theta))$ for {any $\epsilon>0$}.
\end{lemma}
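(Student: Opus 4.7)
The plan is to condition on the sender's location and exploit the i.i.d.\ uniform placement of the $(1-\theta)n$ receivers. Fix a sender at position $\bx$ in the unit disk $D$. The event that its nearest receiver lies at distance greater than $z$ is precisely the event that all $(1-\theta)n$ receivers fall outside $B(\bx,z)\cap D$, where $B(\bx,z)$ denotes the disk of radius $z$ centered at $\bx$. Since the receivers are i.i.d.\ uniform on $D$ (which has area $1$) and independent of the senders, this conditional probability equals $(1-A_{\bx}(z))^{(1-\theta)n}$ with $A_{\bx}(z) := |B(\bx,z)\cap D|$. Marginalizing over the uniform $\bx$ gives $1-F(z) = \EE_{\bx}[(1-A_\bx(z))^{(1-\theta)n}]$.

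Next I would dispose of boundary effects. Whenever $\bx$ lies at distance at least $z$ from $\partial D$, the disk $B(\bx,z)$ sits entirely inside $D$ and $A_\bx(z) = \pi z^2$ exactly. The sender (also uniform on $D$) falls into the boundary annulus of width $z$ with probability $O(z) = O(n^{-\epsilon}) = o(1)$, and on that exceptional set the integrand is bounded by $1$, so
\begin{align*}
1-F(z) = (1-\pi z^2)^{(1-\theta)n}\bigl(1+o(1)\bigr).
\end{align*}

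Finally, I would pass to the exponential via the Taylor expansion $\log(1-\pi z^2) = -\pi z^2 + O(z^4)$, which yields $(1-\theta)n\log(1-\pi z^2) = -(1-\theta)\pi n z^2 + O(nz^4)$. At the scales relevant to the subsequent rate computation in Section~\ref{sec5D} (essentially $z=\Theta(n^{-1/2})$, where $nz^2$ is of constant order), the error $O(nz^4)$ vanishes, so $(1-\pi z^2)^{(1-\theta)n} = \exp(-\pi z^2 n(1-\theta))\bigl(1+o(1)\bigr)$, giving the claimed asymptotic identity. The main obstacle here is really only bookkeeping: ensuring that the two error terms (boundary annulus and Taylor remainder) are uniformly negligible at the scale on which the lemma is actually applied; the probabilistic core is the standard nearest-neighbor computation for i.i.d.\ uniform points with intensity $n(1-\theta)$ on a region of unit area.
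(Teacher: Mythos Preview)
Your argument is correct and follows essentially the same route as the paper: both compute $1-F(z)$ as the probability that all $(1-\theta)n$ i.i.d.\ uniform receivers miss the ball of radius $z$ around the sender, obtaining $(1-\pi z^2)^{(1-\theta)n}$, and then pass to the exponential. The paper does this via the limit $\lim_{x\to 0}(1-x)^{1/x}=e^{-1}$ rather than a Taylor expansion, and it silently ignores the boundary annulus; your version is more careful on both counts but otherwise identical in substance.
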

\begin{proof}
 For an arbitrary sender-receiver pair, let $s(t)$ and $v(t)$ be the location of the sender and the receiver at time $t$, respectively. Then,
\begin{align}
\begin{split}\label{eq:57}
F(z)&=p(|s(t)-v(t)|\leq z)\\
&=p(\min_{i\in\cV_t}(|s(t)-v_i(t)|)\leq z)
\end{split}\\
&\overset{(a)}=1-\limit_{n\rightarrow\infty}\prod_{i=1}^{n(1-\theta)}p(|s(t)-v_i(t)|> z)\label{eq:58}\\
&=1-\limit_{n\rightarrow\infty}(1-\pi z^2)^{n(1-\theta)}\label{eq:59}\\
&=1-\limit_{n\rightarrow\infty}(1-\pi z^2)^{{1\over{\pi z^2}}\cdot \pi z^2 n(1-\theta)}\label{eq:59.1}\\
&\overset{(b)}=1-\exp(-\pi z^2n(1-\theta)),\label{eq:60}
\end{align}
where $v_i(t)$ is the location of receiver $i$ at time $t$ and $\cV_t$ is the set of receivers at time $t$. Here, $(a)$ is since $n$ nodes are uniformly and randomly distributed and $(b)$ is because $\limit_{x\rightarrow 0}(1-x)^{1/x}\triangleq e^{-1}$ by the definition of Euler's number.
\end{proof}

The following two corollaries, which can be proved similarly with Lemma \ref{lem5}, are used to derive an upper bound on the interference power at a receiver in Lemma \ref{lem6} and an upper bound on the aggregate throughput in Section \ref{sec6}.
\begin{corollary}\label{cor2}
Let $F_r(z)$ be the cumulative distribution of the distance $z=\Theta(n^{-\epsilon})$  between a receiver and the nearest sender from the receiver at time $t$. Then, $F_r(z)=1-\exp(-\pi z^2n\theta)$ for {any $\epsilon>0$.}
\end{corollary}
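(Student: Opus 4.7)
The plan is to mirror the proof of Lemma~\ref{lem5} verbatim, swapping the roles of senders and receivers: instead of conditioning on a sender and taking the minimum distance to the $(1-\theta)n$ receivers, I would condition on a receiver and take the minimum distance to the $\theta n$ senders. Since the node positions are i.i.d.\ uniform on the unit disk (Section~\ref{sec2A}) and the sender/receiver partition is made independently of the positions at each time~$t$ (Section~\ref{sec5B}), the senders are i.i.d.\ uniform in the disk conditional on the partition, so the product-of-probabilities step goes through unchanged.

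Concretely, I would fix a receiver at location $v(t)$ and write
\begin{align}
1-F_r(z) &= p\!\left(\min_{i\in\cS_t}|v(t)-s_i(t)|>z\right)\\
&=\prod_{i\in\cS_t}p(|v(t)-s_i(t)|>z),
\end{align}
where $\cS_t$ is the set of senders at time~$t$ and $|\cS_t|=\theta n$. The constraint $z=\Theta(n^{-\epsilon})$ with $\epsilon>0$ ensures that, with high probability as $n\to\infty$, the disk of radius~$z$ around $v(t)$ lies inside the unit disk, so for each sender $p(|v(t)-s_i(t)|>z)=1-\pi z^2$. Taking the product over $\theta n$ senders and rewriting via $(1-\pi z^2)^{\theta n}=(1-\pi z^2)^{(\pi z^2)^{-1}\cdot \pi z^2\theta n}$, the inner limit $\lim_{x\to 0}(1-x)^{1/x}=e^{-1}$ yields
\begin{align}
F_r(z)=1-\exp(-\pi z^2 n\theta),
\end{align}
as required.

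There is no real obstacle distinct from Lemma~\ref{lem5}: the only things to verify are that swapping the roles preserves the independence used in the product step (it does, because the locations of all $n$ nodes are i.i.d.\ uniform at each time~$t$) and that the boundary-effect issue is absorbed by the scaling assumption $z=\Theta(n^{-\epsilon})$. The substitution of $\theta$ for $(1-\theta)$ in the final exponent is immediate from $|\cS_t|=\theta n$ versus $|\cV_t|=(1-\theta)n$.
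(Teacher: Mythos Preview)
Your proposal is correct and follows precisely the approach the paper indicates: the corollary is obtained by repeating the proof of Lemma~\ref{lem5} with the roles of senders and receivers swapped, replacing $(1-\theta)n$ receivers by $\theta n$ senders in the product step to get the exponent $-\pi z^2 n\theta$.
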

\begin{corollary}\label{cor3}
Let $F_s(z)$ be the cumulative distribution of the distance $z=\Theta(n^{-\epsilon})$  between a node and the nearest node from the node at time $t$. Then, $F_s(z)=1-\exp(-\pi z^2n)$ for {any $\epsilon>0$.}
\end{corollary}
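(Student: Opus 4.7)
The plan is to mirror the argument of Lemma~\ref{lem5} essentially verbatim, with the only change being that we consider \emph{all} other $n-1$ nodes as candidates for the nearest neighbor, rather than restricting to the $(1-\theta)n$ receivers. Let $u(t)$ denote the location of the fixed node under consideration at time $t$, and let $u_1(t),\ldots,u_{n-1}(t)$ denote the locations of the other nodes, each uniformly and independently distributed in the unit disk. Then $F_s(z) = 1 - p\bigl(\min_{i}|u(t)-u_i(t)| > z\bigr)$.

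The key steps, in order, are: (i) use independence of the node locations to factor the complementary probability as $\prod_{i=1}^{n-1} p(|u(t)-u_i(t)| > z)$; (ii) compute $p(|u(t)-u_i(t)| \le z) = \pi z^2$ using the fact that a ball of radius $z = \Theta(n^{-\epsilon})$ with $\epsilon > 0$ lies (eventually) fully inside the unit disk with negligible boundary effect, so each of the remaining nodes is within distance $z$ with probability $\pi z^2$ (unit area); (iii) write the complementary probability as $(1-\pi z^2)^{n-1}$; (iv) take the limit as $n \to \infty$, and invoke the definition $\lim_{x \to 0}(1-x)^{1/x} = e^{-1}$, which applies because $\pi z^2 = \Theta(n^{-2\epsilon}) \to 0$; this yields $(1-\pi z^2)^{n-1} = \bigl((1-\pi z^2)^{1/(\pi z^2)}\bigr)^{\pi z^2 (n-1)} \to \exp(-\pi z^2 n)$ since $\pi z^2(n-1) - \pi z^2 n \to 0$ in the regime $z = \Theta(n^{-\epsilon})$ with $\epsilon > 0$.

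There is no real obstacle here; the only point that deserves care is checking that the exponent $\pi z^2(n-1)$ can be replaced by $\pi z^2 n$ in the limit, which follows because the difference vanishes as $n \to \infty$ when $z = \Theta(n^{-\epsilon})$. The mild boundary-effect issue for $z$ near the edge of the unit disk is negligible in the same scaling regime and does not affect the order of the distribution. Putting these pieces together gives $F_s(z) = 1 - \exp(-\pi z^2 n)$ for any $\epsilon > 0$, matching the statement of the corollary.
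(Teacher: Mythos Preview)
Your proposal is correct and follows exactly the approach the paper intends: the paper states that Corollary~\ref{cor3} ``can be proved similarly with Lemma~\ref{lem5},'' and your argument is precisely that---replacing the $(1-\theta)n$ receivers by the $n-1$ other nodes and carrying through the same limit computation. The one small addition you make, noting that $\pi z^2(n-1)$ can be replaced by $\pi z^2 n$ because their difference vanishes when $z=\Theta(n^{-\epsilon})$, is the natural detail needed to land on the stated exponent.
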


\begin{figure}
\centering
\includegraphics[width=0.9\columnwidth]{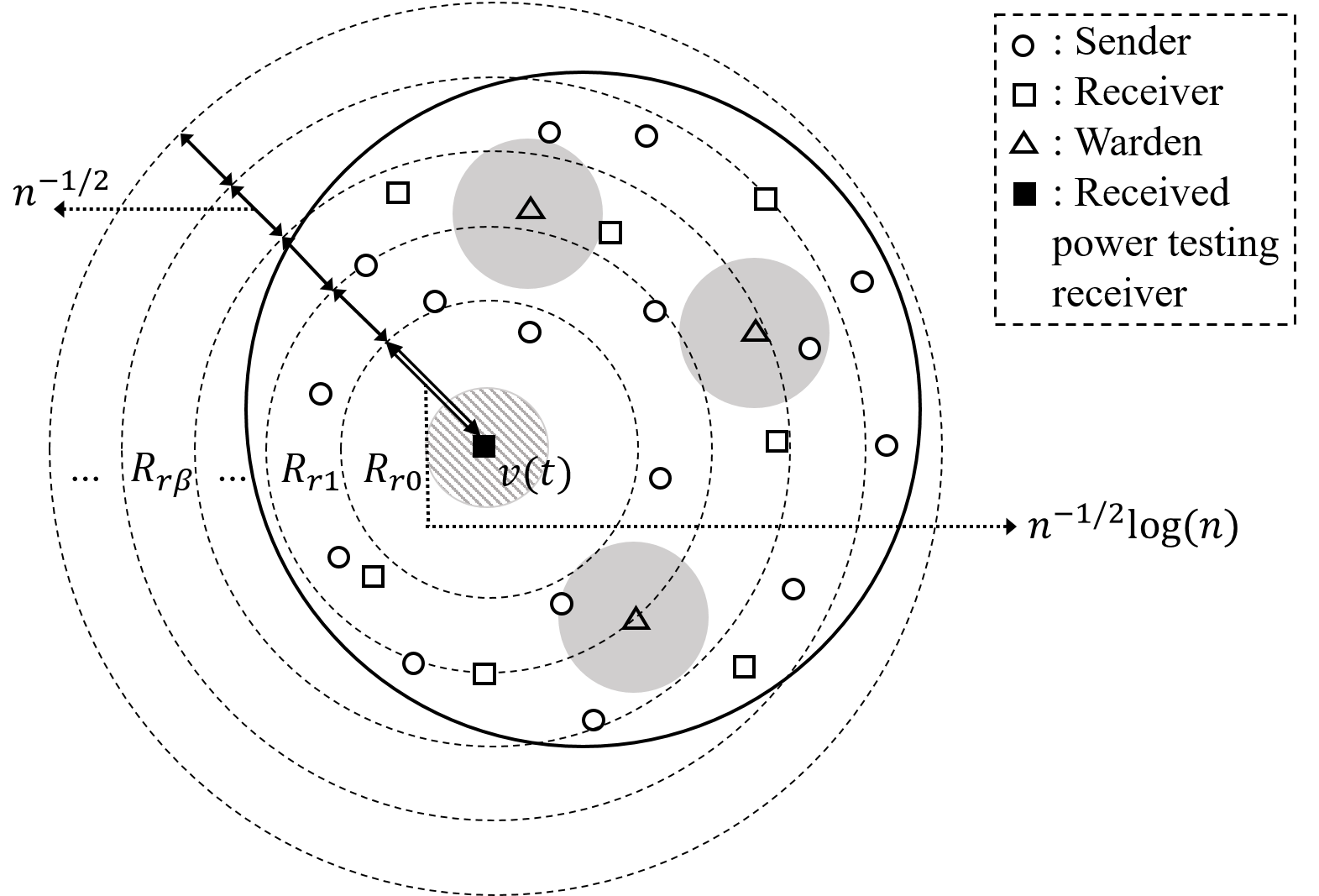} 
\caption{A set of disjoint rings to derive an upper bound on the interference power at a receiver. Here, the (dashed) disk of radius $n^{-({1\over2}+\epsilon')}$ {for an arbitrarily small $\epsilon'>0$} denotes the region where there is no sender in the disk with high probability, $R_{r\beta}$ denotes the $(\beta+1)$-th smallest ring, and each ring except the smallest  has width $n^{-1/2}$. The smallest ring has width $n^{-1/2}\log n-n^{-({1\over2}+\epsilon')}$.}\label{fig7}
\end{figure}

\begin{lemma}\label{lem6}
Let $I(n,s,t)$ be the  interference power at a receiver at time $t$. Then  $I(n,s,t)<P_{\rm{tx}}n^{{\alpha\over2}+\epsilon}$ with high probability for {any $\epsilon>0$}.
\end{lemma}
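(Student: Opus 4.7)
The plan is to mirror the ring-partition argument used in Lemma~\ref{lem3}, but centered at the receiver instead of at a warden, and summing worst-case contributions from all interfering senders. The main wrinkle is that, unlike the warden case where the preservation region around $w$ is deterministic, here we need a random ``exclusion zone'' around the receiver where we can rule out the presence of senders, and this is exactly what Corollary~\ref{cor2} gives us.

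First I would apply Corollary~\ref{cor2} with $z=n^{-(1/2+\epsilon')}$ for an arbitrarily small $\epsilon'>0$: since $F_r(z)=1-\exp(-\pi\theta n^{-2\epsilon'})\to 0$ as $n\to\infty$, the disk of radius $n^{-(1/2+\epsilon')}$ around the receiver contains no sender with high probability. Thus every interferer lies at distance at least $n^{-(1/2+\epsilon')}$ from the receiver. I would then partition the remainder of the unit disk into the disjoint rings shown in Fig.~\ref{fig7}: the innermost ring $R_{r0}$ with inner radius $n^{-(1/2+\epsilon')}$ and outer radius $n^{-1/2}\log n$ (chosen so that its area is $\omega(1/n)$, enabling Corollary~\ref{cor1}), followed by rings $R_{r\beta}$ of width $n^{-1/2}$ out to the boundary of the network.

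For the innermost ring, Corollary~\ref{cor1} gives at most $(1+\delta)\theta n\cdot\pi(n^{-1/2}\log n)^2 = O((\log n)^2)$ senders with high probability, and each contributes at most $P_{\rm{tx}}G\cdot n^{\alpha(1/2+\epsilon')}$ (attained at the inner boundary). Hence $R_{r0}$ contributes $O(P_{\rm{tx}}\,n^{\alpha/2+\alpha\epsilon'}(\log n)^2)$. For the outer rings, I would bound the sum by an upper Riemann approximation, exactly as in the transition from \eqref{eq:49} to \eqref{eq:50}, obtaining the integral
\begin{align}
\int_{n^{-1/2}(\log n-1)}^{1/\sqrt{\pi}+1} P_{\rm{tx}}G(1+\delta)\theta n\cdot 2\pi(x+n^{-1/2})\cdot x^{-\alpha}\,dx. \nonumber
\end{align}
Since $\alpha>2$, this integral evaluates to $O(P_{\rm{tx}}\,n^{\alpha/2}(\log n)^{2-\alpha})$, which is dominated by the innermost-ring term.

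Combining the two bounds, the total interference is $O(P_{\rm{tx}}\,n^{\alpha/2+\alpha\epsilon'}(\log n)^2)$ with high probability. Choosing $\epsilon'$ so that $\alpha\epsilon'<\epsilon/2$ and absorbing the polylogarithmic factor into the remaining $n^{\epsilon/2}$ slack yields the desired bound $I(n,s,t)<P_{\rm{tx}}\,n^{\alpha/2+\epsilon}$. The main technical step is the treatment of $R_{r0}$: senders arbitrarily close to the receiver would destroy the bound, so Corollary~\ref{cor2} is essential to remove them, and the outer radius $n^{-1/2}\log n$ of $R_{r0}$ is tuned precisely to keep its area in the regime where Corollary~\ref{cor1} concentrates the sender count.
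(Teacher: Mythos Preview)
Your proposal is correct and follows essentially the same approach as the paper: the paper also invokes Corollary~\ref{cor2} to clear a disk of radius $n^{-(1/2+\epsilon')}$ around the receiver, uses the same ring partition with innermost ring of outer radius $n^{-1/2}\log n$ (as in Fig.~\ref{fig7}), and then refers back to the computations \eqref{eq:48.1}--\eqref{eq:50} of Lemma~\ref{lem3} to arrive at the bound $K_r P_{\rm{tx}}(\log n)^2 n^{\alpha(1/2+\epsilon')}$ before absorbing the slack into $n^{\epsilon}$. Your explicit separation of the $R_{r0}$ contribution from the outer-ring integral, and your remark on why the outer radius $n^{-1/2}\log n$ is chosen, make the argument slightly more transparent than the paper's terse ``similarly as in \eqref{eq:48.1}--\eqref{eq:50}.''
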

\begin{proof}
Consider an arbitrary receiver in the network and let $v(t)$ denote its location at time $t$. The proof is similar with that of Lemma 3, but now we consider a set of disjoint rings centered at the receiver which cover the whole network except the disk centered at the receiver where there is no sender with high probability. The radius of the disk is $n^{-({1\over2}+\epsilon')}$ {for an arbitrarily small $\epsilon'>0$} by Corollary \ref{cor2} since $F_r(n^{-({1\over2}+\epsilon')})$ converges to zero as $n$ goes to infinity for any $\epsilon'>0$. The smallest ring has width $n^{-1/2}\log n-n^{-({1\over2}+\epsilon')}$ and the other rings have width $n^{-1/2}$ to apply Corollary \ref{cor1}.  Let $I(n,s,t)$ be the interference power at the receiver at time $t$ and $R_{r\beta}$ denote the $(\beta+1)$-th smallest ring. Then, $I(n,s,t)$ is upper-bounded as:
\begin{align}
I(n,s,t)&\overset{(a)}\leq\!\!\!\!\!\!\!\!\!\! \sum_{i:|s_i(t)-v(t)|\geq n^{-({1\over2}+\epsilon')}}\!\!\!\!\!\! P_{\rm{tx}}\cdot {G\over{|s_i(t)-v(t)|^\alpha}}\label{eq:61}\\
\begin{split}\label{eq:62}
&\overset{(b)}\leq \sum_{i\in \cR_{r0}}P_{\rm{tx}}\cdot {G\over  n^{-\alpha({1\over2}+\epsilon')}}+\sum_{\beta=1}^{\kappa_r(n,t)}\sum_{i\in \cR_{r\beta}}P_{\rm{tx}}\\
&\cdot {G\over(n^{-1/2}\log n+(\beta-1) n^{-1/2})^\alpha}
\end{split}\\
&\overset{(c)}\leq K_r P_{\rm{tx}} (\log n)^2 n^{\alpha({1\over2}+\epsilon')}\label{eq:66}\\
&< P_{\rm{tx}}n^{{\alpha\over2}+\epsilon},\label{eq:67}
\end{align}
with high probability {for any $\epsilon>0$ and $\epsilon'>0$}, where $s_i(t)$ denote the location of the $i$-th sender at time $t$, $\kappa_r(n,t)={\lfloor{{1/\sqrt{\pi}+|v(t)|-n^{-1/2}\log n}\over n^{-1/2}}\rfloor}+1$ is the number of rings needed to cover the whole network, $\cR_{r\beta}$ is the set of the senders in $R_{r\beta}$, {and $K_r$ is a positive constant independent with $n$.} Here, $(a)$ is since the senders in the preservation region do not transmit, $(b)$ is by separating $A_{r0}$ region from the other regions, and $(c)$ can be proved  similarly as in \eqref{eq:48.1}-\eqref{eq:50} in the proof of Lemma~\ref{lem3}.
\end{proof}

By using Lemmas \ref{lem5} and \ref{lem6}, we derive a feasible pairwise throughput as follows:
\begin{align}
\begin{split}\label{eq:68}
&\limit_{T\rightarrow\infty}{1\over T}{\sum_{t=1}^{T}R_{j\rm{th-pair}}{(n,s,t)}} \\
&=\limit_{T\rightarrow\infty} {1\over T}{\sum_{t=1}^{T}\log\left(1+{{P_{\rm{tx}}\cdot{r_j(t)}^{-\alpha}}\over{N_0+I_j(n,s,t)}} \right)}
\end{split}\\
&\overset{(a)}\geq{\EE}_t\left(\log\left(1+{{P_{\rm{tx}}\cdot {r_j(t)}^{-\alpha}}\over{N_0+I_j(n,s,t)}} \right)\right)(1-\epsilon'')\label{eq:69}\\ 
\begin{split}\label{eq:70}
&\geq p(r_j(t)< \min(P_{\rm{tx}}^{1\over\alpha},n^{-1/2}))\\
&\cdot {\EE}_t\left(\log\left(1+{{P_{\rm{tx}}\cdot {r_j(t)}^{-\alpha}}\over{N_0+I_j(n,s,t)}} \right) \right. \\
&\ \ \ \ \ \ \ \ \ \  \left. \bigg\rvert r_j(t)< \min(P_{\rm{tx}}^{1\over\alpha},n^{-1/2}) \right)(1-\epsilon'')
\end{split}\\
&\overset{(b)}\geq\begin{cases}\label{eq:71}
F(P_{\rm{tx}}^{1\over\alpha})\cdot\log\left(1+{{1}\over{N_0+1}} \right)(1-\epsilon'') \\
\ \ \ \ \ \ \ \ \ \ \ \ \mbox{for} \ \ P_{\rm{tx}}^{1\over\alpha}\leq n^{-1/2},\\
F( n^{-1/2}) \cdot \log\left(1+{{P_{\rm{tx}}\cdot}n^{\alpha\over2}\over{N_0+P_{\rm{tx}}n^{{\alpha\over2}+\epsilon}}} \right)(1-\epsilon'')\\
\ \ \ \ \ \ \ \ \ \ \ \ \mbox{for} \ \ P_{\rm{tx}}^{1\over\alpha}> n^{-1/2},
\end{cases}\\
&\overset{(c)}\geq F(\min(P_{\rm{tx}}^{1\over\alpha},n^{-1/2}))\cdot\log\left(1+{{1}\over{N_0+1}}\right)(1-\epsilon'')\label{eq:72}\\
\begin{split}\label{eq:73}
&\overset{(d)}\geq \min\left({{\pi P_{\rm{tx}}^{2/\alpha}n\theta}\over{1+{\pi P_{\rm{tx}}^{2/\alpha}n\theta}}},{{\pi\theta}\over{1+{\pi\theta}}}\right)\\
&\cdot \log\left(1+{{1}\over{N_0+1}} \right)(1-\epsilon'')
\end{split}\\
&\geq \min( K'_1n^{1-\epsilon'}P_{\rm{tx}}^{2/\alpha},K'_2n^{-\epsilon'}),\label{eq:74}
\end{align}
with high probability for arbitrary $j$ and {any $\epsilon'>0$ and $\epsilon''>0$}, where {$K'_1$ and $K'_2$ are positive constants independent with $n$.} Here, $(a)$ is by WLLN, $(b)$ is by Lemma \ref{lem6}, $(c)$ is since ${x}\over{N_0+x}$ is an increasing function for $x>0$, and $(d)$ follows from Lemma \ref{lem5} and $1-e^{-x}>{x/{(x+1)}}$ for $x>0$.


Since $R_{\rm{pair}}(n,s)=\min( K'_1n^{1-\epsilon'}P_{\rm{tx}}^{2/\alpha},K'_2n^{-\epsilon'})$ is feasible for any $\epsilon'>0$, by Lemma \ref{lem4}, the following aggregate throughput is achievable for $\epsilon'>0$:
\begin{align}
T(n,s)=n\cdot\min( K'_1n^{1-\epsilon'}P_{\rm{tx}}^{2/\alpha},K'_2n^{-\epsilon'}).\label{eq:76}
\end{align}

Now, Theorems \ref{thm1} and \ref{thm2} are proved by substituting $ P_{\rm{tx}}$ in Lemmas \ref{lem2} and \ref{lem3} into \eqref{eq:76}, respectively.



\section{Converse}\label{sec6}
In this section, we prove Theorems \ref{thm_ub}, \ref{thm3}, and \ref{thm4}. We note that the proofs do not depend on whether the wardens have mobility or not.

\subsection{Proof of Theorem  \ref{thm_ub}}\label{sec6A}
In this subsection, we derive an upper bound on the aggregate throughput by assuming there is no covertness constraint. Let source node $j$ communicate to its destination node $k_j$. Then, $T(n,s)$ is upper-bounded as:
\begin{align}
&T(n,s)= \limit_{T\rightarrow\infty}{1\over T}\sum_{t=1}^{T}\sum_{j=1}^{n}R_{jk_j}(n,s,t)\label{eq:83.1}\\
&\overset{(a)}\leq\limit_{T\rightarrow\infty}{1\over T}\sum_{t=1}^{T}\sum_{j=1}^{n} \log\left(1+\frac{P}{N_0}\sum_{\scriptstyle {i=1}\atop \scriptstyle {i\neq j}}^{n}{G\over{d_{ji}(t)^\alpha}}\right)\label{eq:83.2}\\
&\overset{(b)}\leq\limit_{T\rightarrow\infty}{1\over T}\sum_{t=1}^{T}\sum_{j=1}^{n}\log\left(1+\frac{P}{N_0}(n-1) Gn^{\alpha(1+\epsilon')}\right)\label{eq:83.3}\\
&\leq n\cdot\log\left(1+{{{P}}\over N_0}\cdot Gn^{\alpha(1+\epsilon')+1}\right)\label{eq:86}\\
&\leq K_{\rm{tr}}n^{1+\epsilon},\label{eq:87}
\end{align}
with high probability {for any $\epsilon>0$ and $\epsilon'>0$}, where $d_{ji}(t)$ is the distance between nodes $j$ and $i$ at time $t$ and {$K_{\rm{tr}}$ is a positive constant independent with $n$.} Here $(a)$ is since $R_{jk_j}(n,s,t)$ is upper-bounded by the capacity of the single input multiple output (SIMO) channel between node $j$ and all the other nodes and $(b)$ is since the probability that the minimum distance between two nodes in the network is smaller than $n^{-(1+\epsilon')}$ converges to zero {for any $\epsilon'>0$} as $n$ goes to infinity, i.e., 
\begin{align}
\begin{split}\label{eq:88}
p&\left(\min_{(j,i),j\neq i}d_{ji}(t)<n^{-(1+\epsilon')}\right) \\
&\leq \sum_{j=1}^{n}p\left(\min_{i\neq j}d_{ji}(t)<n^{-(1+\epsilon')}\right)
\end{split}\\
&=n\cdot p\left(\min_{i \geq 2 }d_{1i}(t)<n^{-(1+\epsilon')}\right)\label{eq:88.1}\\
&\overset{(a)}=n\left(1-\left(1-{\pi\over n^{2+2\epsilon'}}\right)^{n-1}\right),\label{eq:89}\\
&\overset{n\rightarrow \infty} \longrightarrow 0 
\end{align}
where $(a)$ is from Collorary \ref{cor3}. This completes the proof. 

\subsection{Proof of Theorems \ref{thm3} and \ref{thm4}}\label{sec6B}
Assume that each node not contained in the regions of radius $\Theta(n^{-(\frac{s}{2}+\epsilon')})$  around each warden for an arbitrarily small $\epsilon'>0$ uses the same power at each channel use. These regions correspond to the preservation regions introduced in the proposed scheme for proving achievability. We note that we do not restrict the behavior of the nodes inside these regions in the converse proof, while we force them not to transmit in the achievability proof.  

We start by deriving an upper bound on the transmission power satisfying \eqref{eq:32}. The proof is similar with that of Lemmas \ref{lem2} and \ref{lem3}, but now we derive a lower bound on the received power at a warden instead of an upper bound. To derive the lower bound, we consider a set of disjoint rings, centered at the warden, that is covered by the whole network as shown in Figs. \ref{fig8} and \ref{fig9}. The following lemmas present upper bounds on the transmission power satisfying the covertness constraint for $0<s<1$ and $s\geq1$.

\begin{figure}
\centering
\includegraphics[width=0.9\columnwidth]{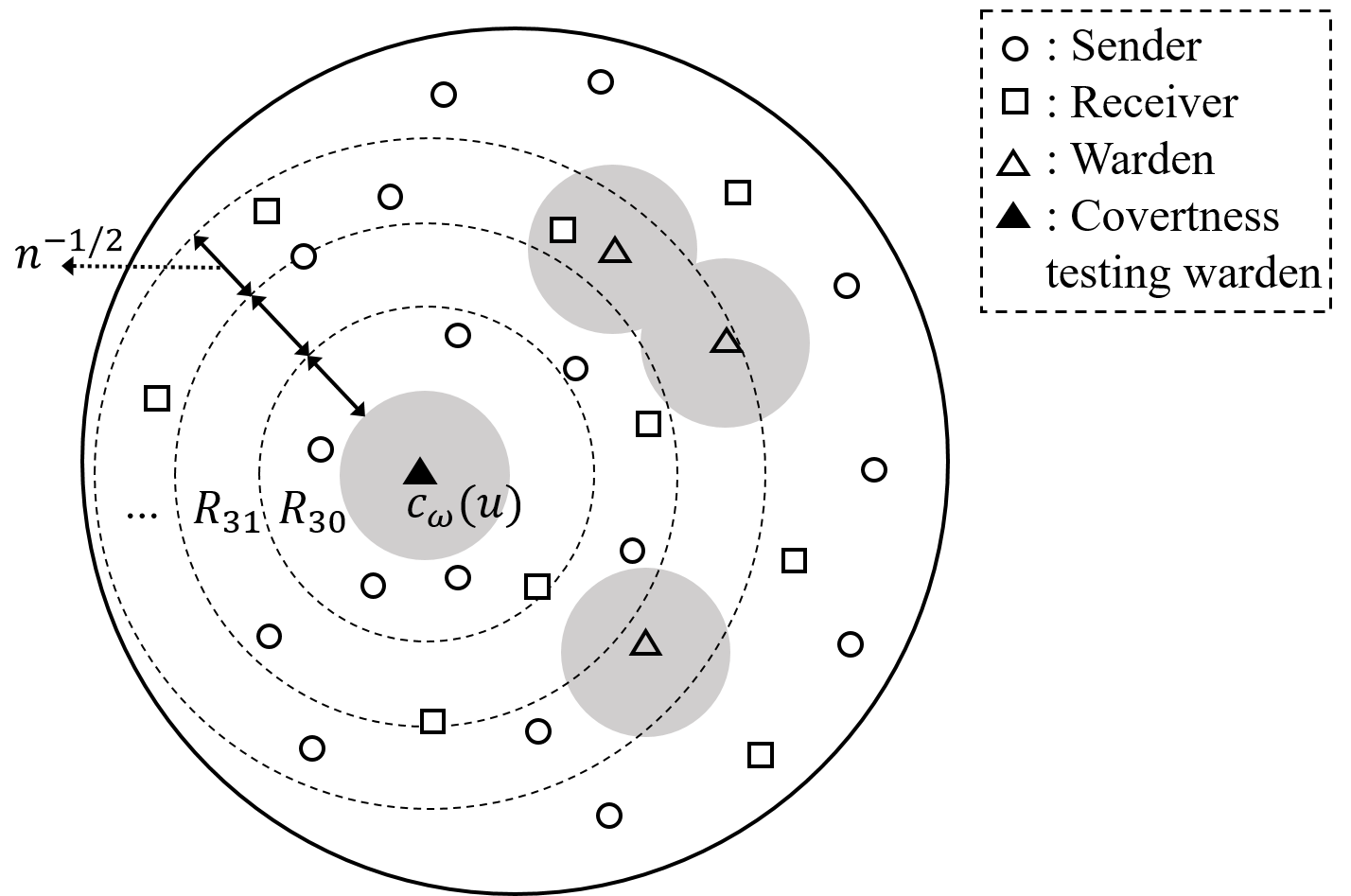}
\caption{A set of disjoint rings to derive a lower bound on the received power at warden $w$ for $0<s<1$. Here, $R_{3\beta}$ denotes the $(\beta+1)$-th smallest ring and each ring has width $n^{-1/2}$.} \label{fig8}
\end{figure}

\begin{lemma}\label{lem7}
Let each node not contained in the regions of radius $r'_p=\Theta(n^{-(\frac{s}{2}+\epsilon')})$ around each warden for an arbitrarily small $\epsilon'>0$ transmit the same power of $P_{\rm{tx}}$ at each channel use. If $0<s<1$ and the network satisfies the covertness constraint, then $P_{\rm{tx}}\leq K_3 l^{-{1\over2}}{n^{-({s\over2}(\alpha-2)+1)+\epsilon}}$ {for any $\epsilon>0$} and a positive constant $K_3$ independent with $n$ with high probability.
\end{lemma}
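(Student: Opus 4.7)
The idea is to turn the argument of Lemma~\ref{lem2} on its head: instead of upper-bounding the received power at a warden, I will lower-bound it, and then invert via the necessary covertness condition \eqref{eq:32}, namely $\bar\rho_w \le \sqrt{2}N_0\sqrt{\delta/l} + o(l^{-1/2})$, to obtain an upper bound on $P_{\rm{tx}}$. Since the claim only needs the bound to hold with high probability, it suffices to derive the lower bound on $\bar\rho_w$ at a single arbitrary warden $w$.

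Fix such a $w$ and a channel use $u$. As in Fig.~\ref{fig8}, I cover the annulus exterior to $w$'s preservation region by the disjoint rings $R_{3\beta}$ of width $n^{-1/2}$, centered at $w$ and starting at radius $r'_p$. Because $s<1$, the radius $r'_p = \Theta(n^{-(s/2 + \epsilon')})$ is $\omega(n^{-1/2})$ when $\epsilon'$ is chosen small (say $\epsilon' < (1-s)/2$), so every ring has area $\omega(1/n)$ and Corollary~\ref{cor1} applies with a high-probability \emph{lower} tail. To turn the Lemma~\ref{lem2} calculation into a lower bound on $\rho_{w,u}$, I replace each sender's distance to $w$ by the \emph{outer} ring radius $r'_p + (\beta+1) n^{-1/2}$ (rather than the inner radius used for the upper bound in \eqref{eq:41.1}), and replace the number of senders in each ring by its Corollary~\ref{cor1} lower bound.

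A subtlety absent from Lemma~\ref{lem2} is that the hypothesis only pins $P_{\rm{tx}}$ on nodes outside \emph{every} preservation region, so senders lying inside some other warden's preservation region are unconstrained. I handle this by simply discarding their contributions and restricting each ring to its intersection with $\bigcap_{w'\ne w} \mathcal{P}_{w'}^{c}$. Since the total area of all other preservation regions is $\Theta(n_w r_p'^2) = \Theta(n^{-2\epsilon'}) = o(1)$ and the $w'$'s are uniform, the expected fraction of each $R_{3\beta}$ that must be removed is $\Theta(n^{-2\epsilon'})$, uniformly negligible; Corollary~\ref{cor1} then still yields at least $(1-\delta)\theta n \cdot 2\pi (r'_p + \beta n^{-1/2}) n^{-1/2}$ qualifying senders per ring with high probability.

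Summing over $\beta$ and recognizing the resulting Riemann sum (now lower-bounded by the corresponding integral), I obtain $\rho_{w,u} \ge K_3'\, P_{\rm{tx}}\, n\, r_p'^{2-\alpha}$ for some positive constant $K_3'$ independent of $n$; the same bound descends to $\bar\rho_w = \frac{1}{l}\sum_u \rho_{w,u}$ by averaging. Plugging into \eqref{eq:32} and solving for $P_{\rm{tx}}$ yields $P_{\rm{tx}} \le K_3\, l^{-1/2} n^{-1} r_p'^{\alpha-2}$, and substituting $r'_p = \Theta(n^{-(s/2+\epsilon')})$ gives the stated bound after absorbing the $\epsilon'(\alpha-2)$ factor into a fresh $\epsilon$. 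The main obstacle I expect is making the ``other wardens' preservation regions are negligible'' step uniformly quantitative across all rings, especially the innermost ring whose area $\Theta(n^{-(1/2 + s/2 + \epsilon')})$ is the smallest; a Markov-type bound on the removed area, followed by an application of Corollary~\ref{cor1} to the restricted set, should close this cleanly provided $\epsilon'$ is chosen sufficiently small relative to $(1-s)/2$.
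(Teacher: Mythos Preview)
Your proposal is correct and follows essentially the same route as the paper: lower-bound the received power at a fixed warden via the ring decomposition of Fig.~\ref{fig8} (outer radii in place of inner, lower tails of Corollary~\ref{cor1} in place of upper), then invert through the necessary condition~\eqref{eq:32}. Two small discrepancies are worth flagging. First, in the converse there is no sender fraction~$\theta$: the hypothesis of Lemma~\ref{lem7} has \emph{every} node outside the preservation regions transmitting at power $P_{\rm tx}$, so the per-ring count should be $(1-\delta)\,n\,A(R_{3\beta})$, not $(1-\delta)\,\theta n\,A(R_{3\beta})$; this only strengthens your lower bound. Second, the paper disposes of the other wardens' preservation regions more bluntly than your ring-by-ring plan: in step~\eqref{eq:91.1} it simply inserts a global factor $\tfrac{1}{2}$, justified by the total area of those regions being $o(1)$, and also restricts the rings to those entirely contained in the unit disk (step~\eqref{eq:92}), a detail you should include as well.
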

\begin{proof}
Let $s_i(u)$ and $c_w(u)$ be the locations of node $i$ and warden $w$ at channel use $u$, respectively.The proof is similar with that of Lemma \ref{lem2}, but now we consider the largest set of disjoint rings that is covered by the whole network as shown in Fig. \ref{fig8}. Let $R_{3\beta}$ denote the $(\beta+1)$-th smallest ring. Then, the received power at warden $w$ at channel use $u$, $\rho_{w,u}$, is lower-bounded as:
\begin{align}
\rho_{w,u}&\geq\sum_{i:s_i(u)\notin\cP_u}P_{\rm{tx}}\cdot {G\over{|s_i(u)-c_w(u)|^\alpha}}\label{eq:91}\\
&\overset{(a)}\geq{1\over2}\cdot\!\!\sum_{i:|s_i(u)-c_w(u)|\geq r'_p}P_{\rm{tx}}\cdot {G\over{|s_i(u)-c_w(u)|^\alpha}}\label{eq:91.1}\\
&\overset{(b)}\geq{1\over2}\cdot\sum_{i:1/\sqrt{\pi}-|c_w(u)|\geq|s_i(u)-c_w(u)|\geq r'_p}\!\!\!\!\!\!\!\!\!\!P_{\rm{tx}}\cdot {G\over{|s_i(u)-c_w(u)|^\alpha}}\label{eq:92}\\
&\overset{(c)}\geq{1\over2}\cdot\sum_{\beta=0}^{\kappa_3(n,u)}P_{\rm{tx}}\cdot {G\cdot |\cR_{3\beta}|\over(r'_p+(\beta+1)n^{-1/2})^\alpha}\label{eq:93}\\
&\overset{(d)}\geq{1\over2}\cdot\sum_{\beta=0}^{\kappa_3(n,u)}P_{\rm{tx}}\cdot {G\cdot(1-\delta)nA(R_{3\beta})\over(r'_p+(\beta+1)n^{-1/2})^\alpha}\label{eq:94}\\
\begin{split}\label{eq:95}
&\overset{(e)}\geq{1\over2}\cdot\sum_{\beta=0}^{\kappa_3(n,u)}P_{\rm{tx}}\cdot {G\over(r'_p+(\beta+1)n^{-1/2})^\alpha}\\
&\cdot (1-\delta)n 2\pi(r'_p+\beta n^{-1/2})n^{-1/2}
\end{split}\\
&\geq\int_{r'_p+n^{-1/2}}^{{1/\sqrt{\pi}}-|c_w(u)|}\!\!\!\!\!\!\!\!P_{\rm{tx}}\cdot{G\over x^\alpha}\cdot (1-\delta)n\pi(x-n^{-1/2})dx\label{eq:96}\\
&\geq K'_3 P_{\rm{tx}} n {r'_p}^{2-\alpha},\label{eq:97}
\end{align}
with high probability {for any $\delta>0$}, where $\cP_u$ is the set of the nodes contained in the regions of radius $r'_p$ around each warden at channel use $u$, $\kappa_3(n,u)={\lfloor{{1/\sqrt{\pi}-|c_w(u)|-r'_p}\over n^{-1/2}}\rfloor}-1$ is the number of the maximum rings covered by the whole network, $\cR_{3\beta}$ is the set of the nodes in $R_{3\beta}$, $A(R_{3\beta})$ is the area of $R_{3\beta}$, {and $K'_3$ is a positive constant independent with $n$.} Here, $(a)$ is since the total area of the regions of radius $r'_p$ around each warden is smaller than $1/2$, $(b)$ is because we only consider the rings inside the network, $(c)$ is by assuming that the nodes in each ring are at the boundary far from warden $w$, $(d)$ is from Corollary \ref{cor1}, and $(e)$ is by lower bounding $A_{3\beta}$. Since \eqref{eq:97} holds for arbitrary channel use $u$, $\bar\rho_w$ is lower-bounded as
\begin{align}
\bar\rho_w \geq K'_3 P_{\rm{tx}} n {r'_p}^{2-\alpha}.\label{eq:98}
\end{align}
By \eqref{eq:32} and \eqref{eq:98}, if the covertness constraint is satisfied, then $ K'_3 P_{\rm{tx}} n {r'_p}^{2-\alpha} \leq \sqrt{2} N_0 \sqrt{\delta \over l}+o(l^{-1/2})$, or $P_{\rm{tx}}\leq K''_3 l^{-{1\over2}}n^{-1} {r'_p}^{\alpha-2}$ for a positive constant $K''_3$ independent with $n$. Since $r'_p=\Theta(n^{-(s/2+\epsilon')})$ {for an arbitrarily small $\epsilon'>0$,} we conclude that  $P_{\rm{tx}}\leq K_3 l^{-{1\over2}}{n^{-({s\over2}(\alpha-2)+1)+\epsilon}}$ {for any $\epsilon>0$} satisfies the covertness constraint for a positive constant $K_3$ independent with $n$.
\end{proof}

\begin{figure}
\centering
\includegraphics[width=0.9\columnwidth]{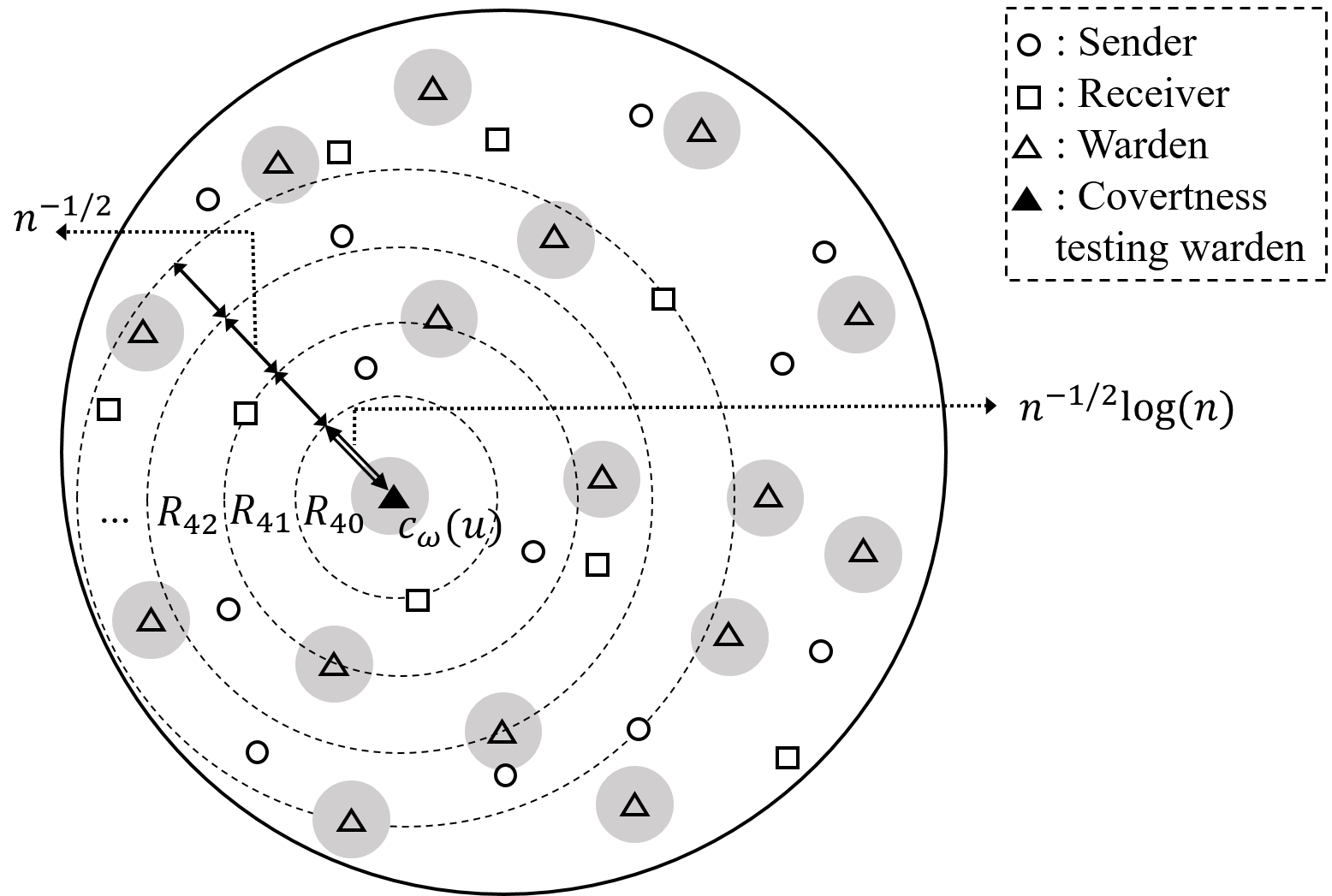} 
\caption{A set of disjoint rings to derive a lower bound on the received power at warden $w$ for $s\geq1$.  Here, $R_{4\beta}$ denotes the $(\beta+1)$-th smallest ring and each ring except the smallest has width $n^{-1/2}$. The smallest ring has width $n^{-1/2}\log n-r'_p$.}\label{fig9}
\end{figure}

\begin{lemma}\label{lem8}
Let each node not contained in the regions of radius $r'_p=\Theta(n^{-(\frac{s}{2}+\epsilon')})$  around each warden for an arbitrarily small $\epsilon'>0$ transmit the same power of $P_{\rm{tx}}$ at each channel use. If $s\geq1$ and the network satisfies the covertness constraint, then $P_{\rm{tx}}\leq K_4 l^{-{1\over2}}{n^{-{\alpha\over2}-\epsilon}}$ {for any $\epsilon>0$ and a positive constant $K_4$} independent with $n$ with high probability.
\end{lemma}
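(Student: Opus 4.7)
The plan is to mirror the proof of Lemma~\ref{lem3} but to invert the direction of the bound on $\rho_{w,u}$, in exactly the same spirit that Lemma~\ref{lem7} inverted Lemma~\ref{lem2}. That is, I would derive a lower bound on the average received power $\bar\rho_w$ at a warden under the uniform-power assumption, and then combine it with the necessary covertness condition \eqref{eq:32} to upper bound $P_{\rm{tx}}$.

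First, fix a warden $w$ and set up the family of disjoint rings sketched in Fig.~\ref{fig9}: the innermost ring $R_{40}$ runs from $r'_p$ out to $n^{-1/2}\log n$, and each subsequent ring $R_{4\beta}$ has width $n^{-1/2}$. As in Lemma~\ref{lem3}, I extend the smallest ring all the way to $n^{-1/2}\log n$ because for $s\geq 1$ the preservation disk has area only $\pi{r'_p}^2=\Theta(n^{-s-2\epsilon'})=o(1/n)$, so Corollary~\ref{cor1} cannot be applied to anything thinner; the enlarged ring $R_{40}$ has area $\Theta(n^{-1}(\log n)^2)=\omega(1/n)$, which is exactly the regime Corollary~\ref{cor1} needs.

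Next, for a fixed channel use $u$, I would lower bound $\rho_{w,u}$ by the power contributed by nodes lying outside every preservation region, so that the uniform-power assumption applies. Since the total area of all preservation regions is $n_w\pi{r'_p}^2=\Theta(n^{-2\epsilon'})\to 0$, and in particular is eventually less than $1/2$, the same trick used in Lemma~\ref{lem7} lets me replace ``outside \emph{every} preservation region'' by ``outside $w$'s preservation region'' at the cost of a harmless factor of $1/2$. Then, for each ring $R_{4\beta}$, Corollary~\ref{cor1} gives a lower bound of $(1-\delta)nA(R_{4\beta})$ on the number of nodes inside it, and the outer radius of the ring serves as an upper bound on the distance to $w$, hence as a lower bound on each node's per-node contribution $P_{\rm{tx}}G/|\cdot|^{\alpha}$. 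The dominant term is the innermost ring, giving $\rho_{w,u}\gtrsim P_{\rm{tx}}\,G\,n^{\alpha/2}(\log n)^{2-\alpha}$, and the Riemann-sum approximation over the remaining rings---driven by $\int x^{1-\alpha}\,dx$ with $\alpha>2$---contributes at most the same order. Since the bound holds at every $u$, it descends to $\bar\rho_w$.

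Inserting this into \eqref{eq:32} gives $P_{\rm{tx}}=O\!\left(l^{-1/2}n^{-\alpha/2}(\log n)^{\alpha-2}\right)$, and absorbing the polylogarithmic factor into an $n^{\epsilon}$ term yields the stated bound. The main obstacle I anticipate is a clean execution of the preservation-region swap above: the uniform-power assumption governs only nodes outside \emph{all} preservation regions, so one has to argue precisely that exchanging ``outside all $n_w$ preservation regions'' for ``outside only $w$'s preservation region'' costs no more than a constant factor rather than a scaling factor. The fact that $n_w{r'_p}^2\to 0$ for $s\geq 1$ is what rescues this step, but the inequality has to be set up carefully enough that the constant survives into the final bound and that the argument remains valid uniformly over $u$.
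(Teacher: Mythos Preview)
Your proposal is correct and follows essentially the same approach as the paper's own proof: the same ring decomposition of Fig.~\ref{fig9} with the enlarged innermost ring to enable Corollary~\ref{cor1}, the same $1/2$-factor trick from Lemma~\ref{lem7} to pass from ``outside all preservation regions'' to ``outside $w$'s preservation region,'' the same identification of the innermost ring as the dominant contribution yielding $\rho_{w,u}\gtrsim P_{\rm{tx}}\,n^{\alpha/2}(\log n)^{2-\alpha}$, and the same appeal to \eqref{eq:32}. Your anticipated obstacle is exactly the step the paper handles via the area bound $n_w\pi{r'_p}^2\to 0$, and your reasoning there is sound.
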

\begin{proof}
Let $s_i(u)$ and $c_w(u)$ be the locations of node $i$ and warden $w$ at channel use $u$, respectively. The proof is similar with that of Lemma \ref{lem7}, but now we consider a set of disjoint rings where the smallest ring has width $n^{-1/2}\log n-r'_p$ to apply Corollary \ref{cor1}, while the width of the other rings remains as  $n^{-1/2}$ as shown in Fig. \ref{fig9}. Let  $R_{4\beta}$ denote the $(\beta+1)$-th smallest ring. Then, the received power at warden $w$ at channel use $u$, $\rho_{w,u}$ is lower-bounded as:
\begin{align}
&\rho_{w,u}\geq\sum_{i:s_i(u)\notin\cP_u}P_{\rm{tx}}\cdot {G\over{|s_i(u)-c_w(u)|^\alpha}}\label{eq:99}\\
&\geq{1\over2}\cdot\sum_{i:1/\sqrt{\pi}-|c_w(u)|\geq|s_i(u)-c_w(u)|\geq r'_p}\!\!\!\!\!\!\!\!\!\!\!P_{\rm{tx}}\cdot {G\over{|s_i(u)-c_w(u)|^\alpha}}\label{eq:100}\\
\begin{split}\label{eq:101}
&\overset{(a)}\geq{1\over2}\cdot P_{\rm{tx}}\cdot {G\over(n^{-1/2}\log n)^{\alpha}}\cdot |\cR_{40}|\\
&+{1\over2}\cdot\sum_{\beta=1}^{\kappa_4(n,u)}P_{\rm{tx}}\cdot {G\over(n^{-1/2}\log n+\beta n^{-1/2})^\alpha}\cdot |\cR_{4\beta}|
\end{split}\\
&\overset{(b)}\geq K'_4 P_{\rm{tx}}n^{\alpha/2}(\log n)^{2-\alpha},\label{eq:104}
\end{align}
with high probability, where $\cP_u$ is the set of the nodes contained in the regions of radius $r'_p$ around each warden at channel use $u$, $\kappa_4(n,u)={\lfloor{{1/\sqrt{\pi}-|c_w(u)|-n^{-1/2}\log n}\over n^{-1/2}}\rfloor}$ is the number of rings covered by the whole network, $\cR_{4\beta}$ is the set of the nodes in $R_{4\beta}$, and $K'_4$ is a positive constant independent with $n$. Here, $(a)$ is by separating $A_{40}$ region from the other regions and $(b)$ can be proved similarly as in \eqref{eq:94}-\eqref{eq:96} in the proof of  Lemma \ref{lem7}. Because \eqref{eq:104} holds for arbitrary time $t$, $\bar\rho_w$ is lower-bounded as:
\begin{align}
\bar\rho_w \geq K'_4 P_{\rm{tx}}n^{\alpha/2}(\log n)^{2-\alpha}.\label{eq:105}
\end{align}
By \eqref{eq:32} and \eqref{eq:105}, if the covertness constraint is satisfied, then $ K'_4 P_{\rm{tx}}n^{\alpha/2}(\log n)^{2-\alpha} \leq \sqrt{2} N_0 \sqrt{\delta \over l}+o(l^{-1/2})$ and we conclude that $P_{\rm{tx}}\leq K_4 l^{-{1\over2}}n^{{\alpha\over2}+\epsilon}$ for a positive constant $K_4$ independent with $n$ and { any $\epsilon>0$.}
\end{proof}

On the other hand, the following lemma presents an upper bound on the throughput of an arbitrary source-destination pair in terms of the distance between the source and its nearest node.
\begin{lemma}\label{lem9}
Let $R(n,s,t)$ be the throughput of an arbitrary source-destination pair and $d_s(t)$ be the distance between the source and the nearest node from the source at time $t$. Then, $R(n,s,t)\leq\log\left(1+{P_{\rm{tx}} \over N_0}\cdot{{Kn^{\epsilon}}\over{d_s(t)^\alpha}}\right)$ {for any $\epsilon>0$} with high probability.
\end{lemma}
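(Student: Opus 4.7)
The plan is to apply a cut-set upper bound on the source-to-destination rate and then control the resulting aggregate channel gain via a ring decomposition centered at the source, mirroring the approach used for the interference term in Lemma~\ref{lem6}. Since every node outside the preservation regions transmits with the same power $P_{\rm{tx}}$, for an arbitrary source node $j$ the rate to its destination is upper-bounded by the capacity of the SIMO channel in which a ``super-receiver'' jointly observes all the other $n-1$ nodes (the same bound invoked in (\ref{eq:83.2})), yielding
$$R(n,s,t)\;\leq\;\log\!\left(1+\frac{P_{\rm{tx}}}{N_0}\sum_{i\neq j}\frac{G}{d_{ji}(t)^\alpha}\right).$$
It therefore suffices to show that $\sum_{i\neq j} G/d_{ji}(t)^\alpha \leq K n^{\epsilon}/d_s(t)^\alpha$ with high probability, for any $\epsilon>0$.

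To bound this aggregate gain, I would partition the complement of the open disk of radius $d_s(t)$ centered at the source into concentric rings analogous to those of Fig.~\ref{fig7}: the innermost ring has inner radius $d_s(t)$ and outer radius $d_s(t)+n^{-1/2}\log n$, and each subsequent ring has width $n^{-1/2}$. These widths are chosen so that every ring has area $\omega(1/n)$, so that Corollary~\ref{cor1} bounds the number of nodes inside each ring by $(1+\delta)n\cdot A(\text{ring})$ with high probability. Upper bounding each $d_{ji}(t)$ in a ring by the ring's inner radius and applying a Riemann-sum argument identical to (\ref{eq:48.1})--(\ref{eq:51}) in Lemma~\ref{lem3}, one finds that the contributions from the outer rings are lower-order than that of the innermost ring, which is at most $O((\log n)^2)\cdot G/d_s(t)^\alpha$ since that ring contains at most $O((\log n)^2)$ nodes, each at distance at least $d_s(t)$.

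Finally, Corollary~\ref{cor3} ensures that $d_s(t)\in[n^{-1/2-\epsilon'}, n^{-1/2+\epsilon'}]$ with high probability for arbitrarily small $\epsilon'>0$, so the polylogarithmic prefactors obtained above can be absorbed into $n^\epsilon$ for any $\epsilon>0$, yielding the claimed inequality. The main obstacle I anticipate is that the innermost ring's inner radius is the random quantity $d_s(t)$ rather than a deterministic lower bound such as the $n^{-(1/2+\epsilon')}$ used in Lemma~\ref{lem6}, so one has to check that the outer-radius choice $d_s(t)+n^{-1/2}\log n$ really keeps that ring's area $\omega(1/n)$ across the high-probability range of $d_s(t)$, and that the remaining rings contribute lower-order terms uniformly on this range without invoking any deterministic floor on $d_s(t)$.
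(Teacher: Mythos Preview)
Your proposal is correct and follows essentially the same approach as the paper: the paper's proof also invokes the SIMO cut-set bound \eqref{eq:107} and then simply states that the remaining estimate on $\sum_{i\neq j} G/d_{ji}(t)^\alpha$ is ``proved similar to the proof of Lemma~\ref{lem6}, but now there is no node in the disk with radius $d_s(t)$ centered at source $j$.'' Your ring decomposition and your use of Corollary~\ref{cor3} to pin down the high-probability range of $d_s(t)$ are precisely the details behind that one-line deferral; the obstacle you flag is real but is resolved exactly as you suggest (and could be sidestepped entirely by using the deterministic inner radius $n^{-(1/2+\epsilon')}$ from Lemma~\ref{lem6} and then converting via $d_s(t)\leq n^{-1/2+\epsilon'}$).
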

\begin{proof}
Let us consider source node $j$. Then, $R(n,s,t)$ is upper-bounded as:
\begin{align}
R(n,s,t)&\overset{(a)}\leq\log\left(1+{P_{\rm{tx}} \over N_0}\sum_{\scriptstyle {i=1}\atop \scriptstyle {i\neq j}}^{n}{G\over{d_{ji}(t)^\alpha}}\right)\label{eq:107}\\
&\overset{(b)}\leq \log\left(1+{P_{\rm{tx}} \over N_0}\cdot{{Kn^{\epsilon}}\over{d_s(t)^\alpha}}\right),\label{eq:108}
\end{align}
with high probability for {any $\epsilon>0$}, where $d_{ji}(t)$ is the distance between nodes $j$ and $i$ at time $t$ {and $K$ is a positive constant independent with $n$.} Here $(a)$ is since $R(n,s,t)$ is upper-bounded by the throughput of single input multiple output (SIMO) channel between node $j$ and the other nodes and $(b)$ is proved similar to the proof of Lemma \ref{lem6}, but now there is no node in the disk with radius $d_s(t)$ centered at source $j$. 
\end{proof}

Now we are ready to derive an upper bound on the aggregate   throughput $T(n,s)$. Let source node $j$ communicate to its destination node $k_j$. Then, 
\begin{align}
T(n,s)&=\limit_{T\rightarrow\infty}{1\over T}\sum_{t=1}^{T}\sum_{j=1}^{n}R_{jk_j}(n,s,t)\label{eq:109}\\
&\overset{(a)}\leq\limit_{T\rightarrow\infty}{1\over T}\sum_{t=1}^{T}\sum_{j=1}^{n}\log\left(1+{P_{\rm{tx}} \over N_0}\cdot{{Kn^{\delta}}\over{d_{s,j}(t)^\alpha}}\right)\label{eq:110}\\
&\overset{(b)}\leq{\EE}_t\left(\sum_{j=1}^{n}\log\left(1+{P_{\rm{tx}} \over N_0}\cdot{{Kn^{\delta}}\over{d_{s,j}(t)^\alpha}}\right)\right)(1+\epsilon'') \label{eq:111}\\ 
&\overset{(c)}= n\cdot{\EE}_t\left(\log\left(1+{P_{\rm{tx}} \over N_0}\cdot{{Kn^{\delta}}\over{d_s(t)^\alpha}}\right)\right)(1+\epsilon''),\label{eq:112}\\
\begin{split}\label{eq:114}
&= n\cdot p(d_s(t)<P_{\rm{tx}}^{1\over\alpha})\\
&\cdot{\EE}_t\left(\log\left(1+{P_{\rm{tx}} \over N_0}\cdot{{Kn^{\delta}}\over{d_s(t)^\alpha}}\right)\bigg\rvert d_s(t)<P_{\rm{tx}}^{1\over\alpha}\right)(1+\epsilon'')\\
&+n\cdot p(d_s(t)\geq P_{\rm{tx}}^{1\over\alpha})\\
&\cdot{\EE}_t\left(\log\left(1+{P_{\rm{tx}} \over N_0}\cdot{{Kn^{\delta}}\over{d_s(t)^\alpha}}\right)\bigg\rvert d_s(t)\geq P_{\rm{tx}}^{1\over\alpha}\right)(1+\epsilon'')
\end{split}\\
\begin{split}\label{eq:115}
&\overset{(d)}\leq K_5\cdot n\cdot p(d_s(t)<P_{\rm{tx}}^{1\over\alpha})\\
&\cdot{\EE}_t\left(\log\left(1+{P_{\rm{tx}} \over N_0}\cdot{{Kn^{\delta}}\over{d_s(t)^\alpha}}\right)\bigg\rvert d_s(t)<P_{\rm{tx}}^{1\over\alpha}\right)
\end{split}\\
&\leq K_5\cdot n\cdot F_s(P_{\rm{tx}}^{1\over\alpha})\cdot n^{\epsilon'}\label{eq:116}\\
&\overset{(e)}\leq K_5\cdot n\cdot \min(\pi nP_{\rm{tx}}^{2/\alpha},1) \cdot n^{\epsilon'}\label{eq:117}\\
&\leq K_5 n^{1+\epsilon'}\min(\pi nP_{\rm{tx}}^{2/\alpha},1) ,\label{eq:118}
\end{align}
with high probability {for any $\epsilon'>0$, $\epsilon''>0$, and $\delta>0$}, {where $K$ and $K_5$ are positive constants independent with $n$,} $d_{s,j}(t)$ is the distance between node $j$ and the nearest node from node $j$ at time $t$, and $d_s(t)=d_{s,1}(t)$ is the distance between node $1$ and the nearest node from node $1$ at time $t$. Here, $(a)$ is by Lemma \ref{lem9}, $(b)$ is by WLLN, $(c)$ is because the nodes are i.i.d., $(d)$ is since the first term of \eqref{eq:114} is dominant, {and $(e)$ follows from Corollary \ref{cor3}, $1-e^{-x}<x$ for $x>0$, and $F_s(y)\leq1$ for $y\geq0$.}

Now, Theorems \ref{thm3} and \ref{thm4} are proved by substituting $P_{\rm{tx}}$ in Lemmas \ref{lem7} and \ref{lem8} into \eqref{eq:118}, respectively.



\section{Conclusion}\label{sec7}
In this paper, we showed that the node mobility greatly improves the throughput scaling of the covert communication over a wireless adhoc network. In particular, the aggregate throughput scaling was shown to be linear in $n$ when the number of channels that each warden uses to judge the presence of communication is not too large compared to~$n$. For achievability, we proposed a mobility-assisted scheme  where the communication from a source to its destination consists of two-hop small-range transmission. This scheme was shown to be optimal for $0<s<1$ under the assumption that each node distant from every warden to a certain extent uses the same power at each channel use. 

We note that our model assumes some impractical situations for the simplicity of analysis. First, it is assumed that the nodes are uniformly and independently distributed in each time $t$. In practice, each node has a correlated trajectory, like random walk model \cite{Gamal:2006-1}. In the case without covertness constraint, it is known that several constraints of node trajectory do not severely affect the throughput scaling. For example, the aggregate throughput still scales linearly in $n$ even if the trajectory of each node is restricted by a random line segment \cite{Diggavi:2005}. Similarly, we conjecture that the throughput scaling will not be severely affected by limited correlation of trajectory even in the presence of the covertness constraint. Second, the delay toleration from source to destination is assumed to be sufficiently large. The trade-off  between the delay toleration and the aggregate throughput was studied in the absence of covertness constraint \cite{Gamal:2006-1, Gamal:2006-2}. It would an interesting further work to study the effect of delay toleration constraint on the covert communication over the wireless adhoc network.  

Finally, we think that proving a nontrivial upper bound without the assumption of equal transmit power would be a good further work. It seems to be challenging  since the distances between the senders and the wardens, which are related to the upper bound on the transmit power from the covertness constraint, and the distances between the senders and the receivers, which affect the transmission rate, independently vary over time. 

\bibliographystyle{IEEEtran}
\bibliography{refs_all}

\begin{thebibliography}{10}
\providecommand{\url}[1]{#1}
\csname url@samestyle\endcsname
\providecommand{\newblock}{\relax}
\providecommand{\bibinfo}[2]{#2}
\providecommand{\BIBentrySTDinterwordspacing}{\spaceskip=0pt\relax}
\providecommand{\BIBentryALTinterwordstretchfactor}{4}
\providecommand{\BIBentryALTinterwordspacing}{\spaceskip=\fontdimen2\font plus
\BIBentryALTinterwordstretchfactor\fontdimen3\font minus
  \fontdimen4\font\relax}
\providecommand{\BIBforeignlanguage}[2]{{%
\expandafter\ifx\csname l@#1\endcsname\relax
\typeout{** WARNING: IEEEtran.bst: No hyphenation pattern has been}%
\typeout{** loaded for the language `#1'. Using the pattern for}%
\typeout{** the default language instead.}%
\else
\language=\csname l@#1\endcsname
\fi
#2}}
\providecommand{\BIBdecl}{\relax}
\BIBdecl

\bibitem{Im:2020}
H.-S. {Im} and S.-H. {Lee}, ``{Mobility-assisted covert communication over
  wireless ad hoc networks},'' \emph{to be presented at 2020 IEEE International
  Symposium on Information Theory (ISIT)}.

\bibitem{Bash:2013}
B.~A. {Bash}, D.~{Goeckel}, and D.~{Towsley}, ``Limits of reliable
  communication with low probability of detection on {AWGN} channels,''
  \emph{IEEE Journal on Selected Areas in Communications}, vol.~31, no.~9, pp.
  1921--1930, Sep. 2013.

\bibitem{Ligong:2016}
L.~{Wang}, G.~W. {Wornell}, and L.~{Zheng}, ``{Fundamental limits of
  communication with low probability of detection},'' \emph{IEEE Transactions
  on Information Theory}, vol.~62, no.~6, pp. 3493--3503, June 2016.

\bibitem{Bloch:2016}
M.~R. {Bloch}, ``Covert communication over noisy channels: A resolvability
  perspective,'' \emph{IEEE Transactions on Information Theory}, vol.~62,
  no.~5, pp. 2334--2354, May 2016.

\bibitem{Che:2014}
P.~H. {Che}, M.~{Bakshi}, C.~{Chan}, and S.~{Jaggi}, ``Reliable deniable
  communication with channel uncertainty,'' in \emph{2014 IEEE Information
  Theory Workshop (ITW 2014)}, Nov 2014, pp. 30--34.

\bibitem{Liu:2018}
Z.~{Liu}, J.~{Liu}, Y.~{Zeng}, J.~{Ma}, and Q.~{Huang}, ``On covert
  communication with interference uncertainty,'' in \emph{2018 IEEE
  International Conference on Communications (ICC)}, May 2018, pp. 1--6.

\bibitem{Shahzad:2017}
K.~{Shahzad}, X.~{Zhou}, and S.~{Yan}, ``Covert communication in fading
  channels under channel uncertainty,'' in \emph{2017 IEEE 85th Vehicular
  Technology Conference (VTC Spring)}, June 2017, pp. 1--5.

\bibitem{Lee:2017}
S.-H. {Lee}, L.~{Wang}, A.~{Khisti}, and G.~W. {Wornell}, ``Covert
  communication with channel-state information at the transmitter,'' \emph{IEEE
  Transactions on Information Forensics and Security}, vol.~13, no.~9, pp.
  2310--2319, 2018.

\bibitem{Abdelaziz:2017}
A.~{Abdelaziz} and C.~E. {Koksal}, ``Fundamental limits of covert communication
  over {MIMO} {AWGN} channel,'' in \emph{2017 IEEE Conference on Communications
  and Network Security (CNS)}, Oct 2017, pp. 1--9.

\bibitem{Vincent:2019}
V.~Y.~F. Tan and S.-H. Lee, ``{Time-division is optimal for covert
  communication over some broadcast channels},'' \emph{{IEEE} Trans.
  Information Forensics and Security}, vol.~14, no.~5, pp. 1377--1389, 2019.

\bibitem{Keerthi:2016}
K.~S.~K. {Arumugam} and M.~R. {Bloch}, ``{Covert communication over a $K$-user
  multiple-access channel},'' \emph{IEEE Transactions on Information Theory},
  vol.~65, no.~11, pp. 7020--7044, Nov 2019.

\bibitem{Cho:2020}
K.-H. Cho and S.-H. Lee, ``Treating interference as noise is optimal for covert
  communication over interference channels,'' \emph{IEEE Transactions on
  Information Forensics and Security, submitted for publication. [Online].
  Available:https://arxiv.org/abs/2003.04531}.

\bibitem{Wu:2017}
H.~{Wu}, X.~{Liao}, Y.~{Dang}, Y.~{Shen}, and X.~{Jiang}, ``{Limits of covert
  communication on two-hop {AWGN} channels},'' in \emph{2017 International
  Conference on Networking and Network Applications (NaNA)}, Oct 2017, pp.
  42--47.

\bibitem{Sheikholeslami:2018}
A.~{Sheikholeslami}, M.~{Ghaderi}, D.~{Towsley}, B.~A. {Bash}, S.~{Guha}, and
  D.~{Goeckel}, ``{Multi-hop routing in covert wireless networks},'' \emph{IEEE
  Transactions on Wireless Communications}, vol.~17, no.~6, pp. 3656--3669,
  June 2018.

\bibitem{Cho:2019}
K.-H. {Cho}, S.-H. {Lee}, and V.~Y.~F. {Tan}, ``Throughput scaling of covert
  communication over wireless ad hoc networks,'' in \emph{2019 IEEE
  International Symposium on Information Theory (ISIT)}, July 2019, pp.
  2164--2168.

\bibitem{Gupta:2000}
P.~{Gupta} and P.~R. {Kumar}, ``{The capacity of wireless networks},''
  \emph{IEEE Transactions on Information Theory}, vol.~46, no.~2, pp. 388--404,
  March 2000.

\bibitem{Ozgur:2007}
A.~\"Ozg\"ur, O.~L\'ev\^eque, and D.~N.~C. Tse, ``{Hierarchical cooperation
  achieves optimal capacity scaling in ad hoc networks},'' \emph{IEEE
  Transactions on Information Theory}, vol.~50, no.~10, pp. 3549--3572, Oct.
  2007.

\bibitem{Jeon:2011}
S.-W. Jeon, N.~Devroye, M.~Vu, S.-Y. Chung, and V.~Tarokh, ``{Cognitive
  networks achieve throughput scaling of a homogeneous network},'' \emph{IEEE
  Transactions on Information Theory}, vol.~57, no.~8, pp. 5103--5115, Aug.
  2011.

\bibitem{David:2002}
M.~Grossglauser and D.~N.~C. Tse, ``{Mobility increases the capacity of ad hoc
  wireless networks},'' \emph{IEEE/ACM Transactions on Networking}, vol.~10,
  no.~4, pp. 477--486, Aug. 2002.

\bibitem{LehmannRomano:TSH}
E.~L. {Lehmann} and J.~P. {Romano}, \emph{{Testing statistical
  hypotheses}}.\hskip 1em plus 0.5em minus 0.4em\relax New York, NY:
  Springer-Verlag, 2005.

\bibitem{Gamal:2006-1}
A.~E. {Gamal}, J.~{Mammen}, B.~{Prabhakar}, and D.~{Shah}, ``{Optimal
  throughput-delay scaling in wireless networks - part I: the fluid model},''
  \emph{IEEE Transactions on Information Theory}, vol.~52, no.~6, pp.
  2568--2592, 2006.

\bibitem{Diggavi:2005}
S.~N. {Diggavi}, M.~{Grossglauser}, and D.~N.~C. {Tse}, ``Even one-dimensional
  mobility increases the capacity of wireless networks,'' \emph{IEEE
  Transactions on Information Theory}, vol.~51, no.~11, pp. 3947--3954, Nov
  2005.

\bibitem{Gamal:2006-2}
A.~E. {Gamal}, J.~{Mammen}, B.~{Prabhakar}, and D.~{Shah}, ``{Optimal
  throughput–delay scaling in wireless networks — part II: constant-size
  packets},'' \emph{IEEE Transactions on Information Theory}, vol.~52, no.~11,
  pp. 5111--5116, 2006.

\end{thebibliography}

\end{document}